\documentclass[11pt]{article}
\usepackage{fullpage} 
\usepackage[indent]{parskip} 

\usepackage[T1]{fontenc} 
\usepackage{lmodern} 
\usepackage{newtxtext}
\makeatletter%
\@ifpackageloaded{xeCJK}{}{%
    \usepackage[utf8]{inputenc}}
\makeatother

\usepackage[shortlabels,inline]{enumitem}

\usepackage{amsmath,amssymb,amsfonts}
\usepackage{dsfont}
\usepackage{bm}
\usepackage[mathscr]{eucal} 
\usepackage{amsthm}
\usepackage[mleftright]{diffcoeff}
	\difdef{f, s, fp, c}{}{%
		outer-Ldelim=\left.,%
		outer-Rdelim=\right|,%
		sub-nudge=0mu}%
	\difdef{c}{J}{%
		op-symbol=\rmJ,%
		op-sub-nudge=-2mu}%
\usepackage[group-separator={,}]{siunitx} 

\usepackage{xcolor}
\usepackage[pdftex]{graphicx} 
\usepackage{pgfplots}
	\pgfplotsset{compat=1.18}
	\usetikzlibrary{arrows}
\usepackage[export]{adjustbox}
\makeatletter%
\@ifclassloaded{ieeecolor}
	{\usepackage[caption=false,font=footnotesize]{subfig}}
	{\usepackage{subcaption}
	\captionsetup[figure]{font=footnotesize}}
\makeatother

\makeatletter%
\@ifpackageloaded{natbib}{}{%
    \usepackage[noadjust]{cite}} 
\makeatother

\usepackage{lipsum}
\usepackage{nameref}
\usepackage{silence}
\usepackage{verbatim} 
\usepackage{url} 
\usepackage{endnotes}

\makeatletter
\newcommand{\disablepackage}[2]{%
  \disable@package@load{#1}{#2}}
\makeatother

    \setlist{nosep,leftmargin=*}

\usepackage[colorlinks,citecolor=teal]{hyperref}

\newcommand{\new}[1]{{\color{blue} #1}} 
\newcommand{\todo}[1]{{\color{red} #1}} 

\newcommand{\later}[1]{{\color{teal} #1}} 

\theoremstyle{plain}
\newtheorem{thm}{Theorem}

\newtheorem{lem}{Lemma}

\theoremstyle{definition}

\newtheorem{ass}{Assumption}

\theoremstyle{remark}

\newtheorem*{clm*}{Claim}

\makeatletter%
\@tempcnta=\@ne
\def\@nameedef#1{\expandafter\edef\csname #1\endcsname}
\loop\ifnum\@tempcnta<27
    \@nameedef{bb\@Alph\@tempcnta}{{\noexpand\mathbb{\@Alph\@tempcnta}}}
    \@nameedef{bf\@Alph\@tempcnta}{{\noexpand\mathbf{\@Alph\@tempcnta}}}
    \@nameedef{bm\@Alph\@tempcnta}{{\noexpand\bm{\@Alph\@tempcnta}}}
    \@nameedef{cal\@Alph\@tempcnta}{{\noexpand\mathcal{\@Alph\@tempcnta}}}
    \@nameedef{scr\@Alph\@tempcnta}{{\noexpand\mathscr{\@Alph\@tempcnta}}}
    \@nameedef{ds\@Alph\@tempcnta}{{\noexpand\mathds{\@Alph\@tempcnta}}}
    \@nameedef{frak\@Alph\@tempcnta}{{\noexpand\mathfrak{\@Alph\@tempcnta}}}
    \@nameedef{rm\@Alph\@tempcnta}{{\noexpand\mathrm{\@Alph\@tempcnta}}}
    \@nameedef{sf\@Alph\@tempcnta}{{\noexpand\mathsf{\@Alph\@tempcnta}}}
    \@nameedef{op\@Alph\@tempcnta}{{\noexpand\operatorname{\@Alph\@tempcnta}}}
    \@nameedef{hat\@Alph\@tempcnta}{{\noexpand\hat{\@Alph\@tempcnta}}}
    \@nameedef{tilde\@Alph\@tempcnta}{{\noexpand\tilde{\@Alph\@tempcnta}}}
    \@nameedef{ul\@Alph\@tempcnta}{{\noexpand\underline{\@Alph\@tempcnta}}}
    \@nameedef{ol\@Alph\@tempcnta}{{\noexpand\overline{\@Alph\@tempcnta}}}
	\@nameedef{v\@Alph\@tempcnta}{{\noexpand\vec{\@Alph\@tempcnta}}}
    \advance\@tempcnta\@ne
\repeat
\makeatother

\let\vec\undefined
\newcommand*{\vec}[1]{\bm{#1}}

\newcommand*{\Z}{\mathbb{Z}}

\newcommand*{\R}{\mathbb{R}}

\newcommand*{\K}{\mathcal{K}}
\newcommand*{\KInf}{\K_\infty}

\let\ForAll\forall
\let\forall\undefined
\DeclareMathOperator\forall{\ForAll}
\let\Exists\exists
\let\exists\undefined
\DeclareMathOperator\exists{\Exists}

\DeclareMathOperator*{\esssup}{ess\,sup}

	\renewcommand\new[1]{#1}    
	\renewcommand\todo[1]{}    
	\renewcommand\later[1]{}    

\DeclareMathOperator{\eigMax}{\overline\lambda}
\DeclareMathOperator{\eigMin}{\underline\lambda}

\title{\LARGE\bf Input-to-state stabilization of linear systems under data-rate constraints}
\author{Mahmoud Zamani and Guosong Yang%
    \thanks{The authors are with the Department of Electrical and Computer Engineering, Rutgers University--New Brunswick, Piscataway, NJ 08854 USA (e-mails: {\tt \{mahmoud.zamani,\,guosong.yang\}@rutgers.edu}).}%
}
\date{}
    
\begin{document}
\maketitle

\begin{abstract}
We study feedback stabilization of linear systems under data-rate constraints in the presence of completely unknown disturbances.
A communication and control strategy is proposed based on sampled and quantized state measurements, where the quantization range is dynamically adjusted using reachable-set approximations and disturbance estimates derived from quantization parameters.
\new{The strategy alternates between stabilizing and searching stages to recapture the state after escapes from the quantization range.}
Under a data-rate condition, it guarantees input-to-state stability (ISS) with respect to the disturbance.
An additional quantization symbol is introduced to establish ISS near the equilibrium.
A simulation example illustrates the effectiveness of the proposed approach.
\end{abstract}

\section{Introduction}\label{sec:intro}
Feedback control under data-rate constraints has been an active research area for decades, as surveyed in, e.g., \cite{NairFagnaniZampieriEvans2007,JiangLiu2013,ParkErgenFischioneLuJohansson2018}.
Such constraints arise naturally in networked control systems due to communication costs, bandwidth limitations, and security considerations.
Beyond these practical motivations, a fundamental question is how much information is required to achieve a given control objective.

A widely used framework for achieving finite data rate is to generate the control input from sampled and quantized state measurements taking values in a finite set.
Beginning with \cite{Delchamps1990,EliaMitter2001,BrockettLiberzon2000,Liberzon2003,HespanhaOrtegaVasudevan2002,Liberzon2003TAC,TatikondaMitter2004}, various quantization schemes have been developed for stabilizing linear systems.
In particular, asymptotic stabilization under finite data rate requires dynamic adjustment of the size of the quantization range (zooming) \cite{BrockettLiberzon2000,HespanhaOrtegaVasudevan2002,Liberzon2003TAC,Liberzon2003,TatikondaMitter2004}, while the required data rate can be reduced by dynamically adjusting its center as well (moving-center quantization) \cite{HespanhaOrtegaVasudevan2002,Liberzon2003TAC,TatikondaMitter2004}.
These ideas have subsequently been extended to nonlinear systems \cite{Liberzon2003,LiberzonHespanha2005,Ferdinando2022} and switched systems \cite{Liberzon2014,YangLiberzon2018}. 

We consider feedback stabilization under data-rate constraints in the presence of unknown disturbances.
In \cite{HespanhaOrtegaVasudevan2002,TatikondaMitter2004}, a known bound on the disturbance is assumed, and the minimum data rate required for asymptotic stabilization is characterized.
Without such a bound, the problem becomes significantly more challenging, as disturbances may drive the state outside the quantization range after capture.
In this setting, \cite{LiberzonNesic2007} established input-to-state stability (ISS) \cite{Sontag1989} using dynamic quantization with alternating zooming-out and zooming-in stages.
Similar ISS results were obtained in \cite{Wang2017} using logarithmic quantization in polar coordinates, and in \cite{SharonLiberzon2012} using moving-center quantization with improved data-rate efficiency.
However, the fixed-center quantization schemes in \cite{LiberzonNesic2007,Wang2017} lead to implicit and generally conservative data-rate bounds, whereas the moving-center scheme in \cite{SharonLiberzon2012} relies on additional mechanisms such as a dedicated escape-detection mode and quantizer resets.
More recently, \cite{YangLiberzon2018} proposed a disturbance-estimation framework for switched linear systems that avoids these complexities, but achieves only the weaker property of practical ISS \cite{SontagWang1996}. 

This paper addresses these limitations for linear systems with completely unknown disturbances.
Extending the disturbance-estimation idea of \cite{YangLiberzon2018}, we propose a communication and control strategy that combines the data-rate efficiency of moving-center quantization with a Lyapunov-based design and eliminates the need for a dedicated escape-detection mode or quantizer resets.
Our results provide an explicit characterization of the admissible data rate and establish ISS rather than practical ISS.
To this end, we design a disturbance estimate derived from quantization parameters to replace the estimate based on an increasing sequence of constants in \cite{YangLiberzon2018}, and introduce an additional quantization symbol for establishing ISS near the equilibrium.
These features distinguish the proposed approach from \cite{LiberzonNesic2007,SharonLiberzon2012,Wang2017,YangLiberzon2018} and constitute the main methodological contribution of the paper.

The remainder of the paper is organized as follows.
Section~\ref{sec:pre} introduces the system model and information structure.
Section~\ref{sec:main} presents the main theorem.
Section~\ref{sec:strat} describes the communication and control strategy, while Section~\ref{sec:reach} derives the update formulas \new{using reachable-set approximations}.
Section~\ref{sec:proof} provides the stability analysis, \new{with key intermediate results summarized as technical lemmas}.
Section~\ref{sec:sim} presents a simulation example.
Section~\ref{sec:end} concludes the paper \new{with a summary and an outlook for future research.}

\emph{Notation:}
The $ \infty $-norm of a vector $ v = (v_1, \ldots, v_n) \in \R^n $ is denoted by $ |v| := |v|_\infty = \max_{1 \leq i \leq n} |v_i| $.
The induced $ \infty $-norm of a matrix $ M = [a_{ij}] \in \R^{n \times n} $ is denoted by $ \|M\| := \|M\|_\infty = \max_{1 \leq i \leq n} \sum_{j=1}^{n} |a_{ij}| $.
For a symmetric matrix $ M \in \R^{n \times n} $, the largest and smallest eigenvalues are denoted by $ \eigMax(M) $ and $ \eigMin(M) $, respectively. 
A continuous function $ \gamma: \R_{\geq 0} \to \R_{\geq 0} $ is of class $ \KInf $, denoted by $ \gamma \in \KInf $, if it is strictly increasing, unbounded, and satisfies $ \gamma(0) = 0 $.
The left limit of a piecewise continuous function $ z(\cdot) $ at $ t $ is denoted by $ z(t^-) := \lim_{s \nearrow t} z(s) $.

\section{Problem formulation}\label{sec:pre}

\subsection{System definition}\label{ssec:pre-sys}
Consider a continuous-time linear system
\begin{equation}\label{eq:lin}
    \dot x = A x + B u + D d, \qquad x(0) = x_0,
\end{equation}
where $ x \in \R^{n_x} $ is the state, $ u \in \R^{n_u} $ is the control input, and $ d \in \R^{n_d} $ is an unknown disturbance.
The disturbance $ d(\cdot) $ is assumed to be Lebesgue measurable and locally essentially bounded.
The essential supremum $ \infty $-norm of $ d(\cdot) $ over an interval $ J $ is denoted by $ \|d\|_{J} := \esssup_{t \in J} |d(t)| $. 

Our first basic assumption concerns stabilizability of \eqref{eq:lin}.
\begin{ass}[Stabilizability]\label{ass:stabilizable}
The pair $ (A, B) $ is stabilizable, that is, there exists a state feedback gain matrix $ K $ such that $ A + B K $ is Hurwitz (all eigenvalues have negative real parts).
\end{ass}
In what follows, we assume that such a matrix $ K $ has been selected and fixed.

\subsection{Information structure}\label{ssec:pre-info}
We aim to generate a stabilizing control input $ u(\cdot) $ based on limited information about the state $ x(\cdot) $.
\begin{figure}[!t]
\centering
\includegraphics[width=.4\textwidth]{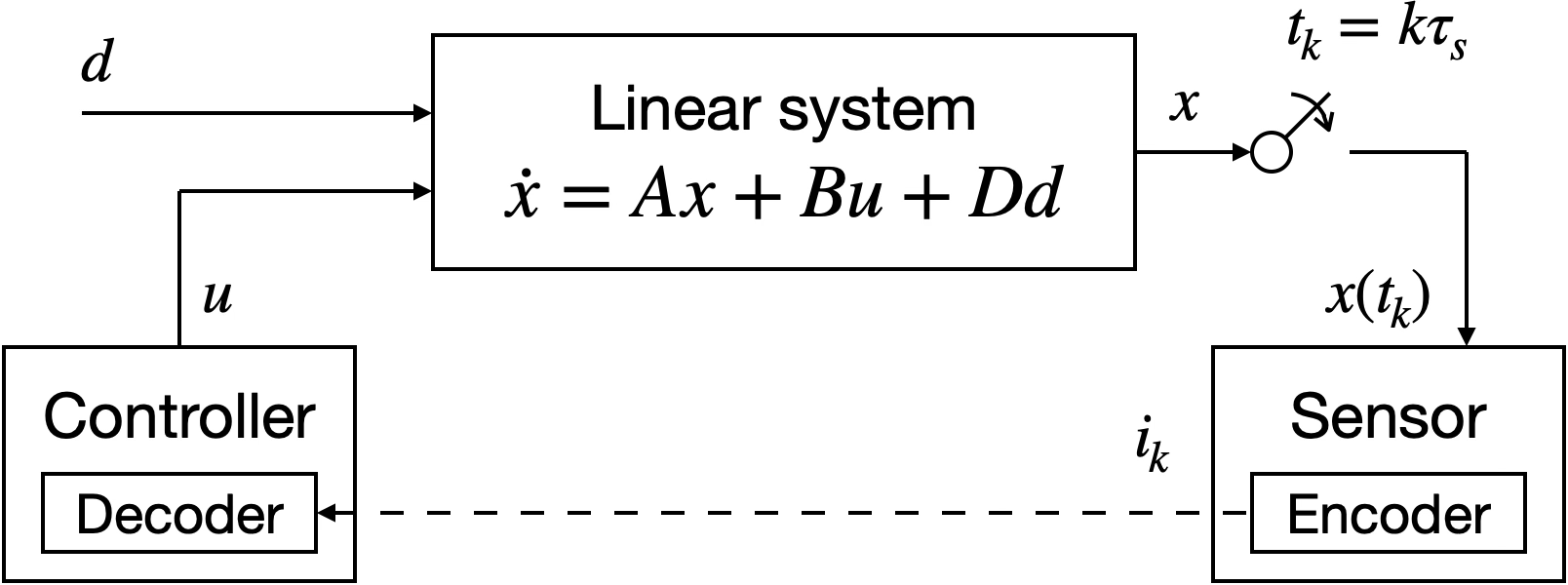}
\caption{Information structure.}
\label{fig:sys-info}
\end{figure}
As shown in Fig.~\ref{fig:sys-info}, the feedback loop includes a sensor with an encoder and a controller with a decoder.
The sensor samples the state at times $ t_k = k \tau_s,\, k \in \Z_{\geq 0} $, where $ \tau_s > 0 $ is a fixed \emph{sampling period}.
Each sample $ x(t_k) $ is encoded as an integer $ i_k \in \{0, 1, \ldots, N^{n_x} + 1\} $, where $ N > 0 $ is a fixed integer, and then transmitted to the controller.
The resulting data transmission rate is
\[
	\frac{\log_2(N^{n_x} + 2)}{\tau_s}
\]
bits per unit time. 
This information structure enables a separation of sensing and control tasks, since the controller does not require access to the exact state.
Details of the communication and control strategy are presented in Section~\ref{sec:strat}.

Our second basic assumption concerns the data rate.
\begin{ass}[Data rate]\label{ass:data-rate}
The sampling period $ \tau_s $ and the integer $ N $ satisfy
\begin{equation}\label{eq:data-rate-bnd}
    \Lambda := \|e^{A \tau_s}\| < N.
\end{equation}
\end{ass}
The inequality in \eqref{eq:data-rate-bnd} requires $ N $ to be sufficiently large relative to $ \tau_s $, thereby imposing a lower bound on the admissible data rate.
Similar data-rate conditions were shown to be sufficient for stabilizing linear systems \cite{Liberzon2003TAC,SharonLiberzon2012} and switched linear systems \cite{Liberzon2014,YangLiberzon2018}.
\footnote{\new{In contrast to \cite{Liberzon2003TAC,SharonLiberzon2012,Liberzon2014,YangLiberzon2018}, the proposed approach does not require the integer $ N $ to be odd, owing to an additional quantization symbol introduced for establishing ISS near the equilibrium; see Section~\ref{ssec:strat-stab} for more details.}}
These conditions, including \eqref{eq:data-rate-bnd}, are generally conservative compared to the minimal data rate required for stabilization characterized in \cite{HespanhaOrtegaVasudevan2002,TatikondaMitter2004}.
However, the gap can be made arbitrarily small by tailoring the quantization scheme to the structure of the system matrix $ A $; see, e.g., \cite[Sec.~V]{SharonLiberzon2012}.
Furthermore, the results in \cite{HespanhaOrtegaVasudevan2002,TatikondaMitter2004} assume a known bound on the disturbance, whereas the disturbance is completely unknown in \cite{SharonLiberzon2012,YangLiberzon2018} and in the present work.

\section{Main result}\label{sec:main}
The control objective is to stabilize the system defined in Section~\ref{ssec:pre-sys} under the information structure described in Section~\ref{ssec:pre-info} in a robust sense.
Specifically, we aim to establish an \emph{input-to-state stability (ISS)} \cite{Sontag1989} property with respect to the disturbance.
The theorem below adopts a characterization of ISS from \cite{SontagWang1996}; see also \cite[Sec.~II]{LiberzonNesic2007}. 
\begin{thm}\label{thm:main}
Consider the linear system \eqref{eq:lin}. Suppose that Assumptions~\ref{ass:stabilizable} and~\ref{ass:data-rate} hold.
Then there exists a communication and control strategy that yields the following ISS property with respect to the disturbance: there exist functions $ \gamma_1, \gamma_2, \gamma_3 \in \KInf $ such that, for any initial condition $ x_0 \in \R^{n_x} $ and any disturbance $ d(\cdot) $, the closed-loop solution satisfies
\begin{equation}\label{eq:iss-gs}
	|x(t)| \leq \gamma_1(|x_0|) + \gamma_2(\|d\|_{[0, \infty)}) \qquad \forall t \geq 0
\end{equation}
and
\begin{equation}\label{eq:iss-ag}
	\limsup_{t \to \infty} |x(t)| \leq \gamma_3 \left( \limsup_{t \to \infty} |d(t)| \right).
\end{equation}
\end{thm}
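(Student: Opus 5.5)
The proof will be constructive. We build an explicit encoder/decoder pair and show that it yields \eqref{eq:iss-gs} and \eqref{eq:iss-ag}.

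\textbf{The strategy.} Both sides maintain, synchronously, a quantization center $c_k\in\R^{n_x}$ and a range $E_k>0$, so that the box $\{x:|x-c_k|\le E_k\}$ is the current estimate of where $x(t_k)$ lies. The box is divided into $N^{n_x}$ sub-boxes of half-width $E_k/N$, which use the indices $1,\ldots,N^{n_x}$. Index $0$ signals an overflow, meaning $|x(t_k)-c_k|>E_k$. Index $N^{n_x}+1$ is the extra symbol and signals that $x(t_k)$ is "small" relative to the current disturbance estimate.

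On $[t_k,t_{k+1})$ the controller applies $u=K\hat x(t)$, where $\hat x$ solves $\dot{\hat x}=A\hat x+BK\hat x$ and starts from the quantized value $q_k$.

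\textbf{Stabilizing stage.} The error $e=x-\hat x$ satisfies $\dot e=Ae+Dd$. Hence $|e(t_{k+1}^-)|\le \Lambda E_k/N+\Theta\|d\|_{[t_k,t_{k+1}]}$, with $\Theta=\int_0^{\tau_s}\|e^{As}D\|\,ds$. The next center and range are updated by $c_{k+1}=\hat x(t_{k+1}^-)$ and
\[
E_{k+1}=\frac{\Lambda}{N}E_k+\Theta\hat\delta_k .
\]
Here $\hat\delta_k$ is a disturbance estimate built from the quantization parameters, for example a fixed fraction of $E_k$. With that choice the contraction factor $\Lambda/N+\Theta\epsilon<1$ is guaranteed by Assumption~\ref{ass:data-rate} once $\epsilon$ is small enough.

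\textbf{Searching stage.} After an overflow, the controller sets $u=0$ and grows the range superexponentially, for example $E_{k+1}=\Lambda E_k+ \Theta\hat\delta_k$ with $\hat\delta_k$ increased geometrically. Since $\|e^{A t}\|$ growth is linear in $E$ and the true disturbance is bounded on compact intervals, the state is recaptured in finitely many steps. The number of steps is bounded by a function of $\log(|x|/E)$ and $\|d\|$.

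\textbf{Key lemmas, in order.}
\begin{enumerate}
\item \emph{Reachable-set invariance.} If $|x(t_k)-c_k|\le E_k$ and $\|d\|_{[t_k,t_{k+1}]}\le\hat\delta_k$, then $|x(t_{k+1})-c_{k+1}|\le E_{k+1}$. Consequently, any escape certifies that $\|d\|$ exceeds the current estimate, which in turn is $\gtrsim E_k$.
\item \emph{Lyapunov decrease while captured.} Take $V=x^\top Px$ with $(A+BK)^\top P+P(A+BK)=-I$. Treating $e$ and $d$ as inputs gives an ISS estimate for $x$ in terms of $E_k$ and $d$. Since $E_k$ decays geometrically until it reaches the level $\sim\|d\|$, this yields $|x|\lesssim \beta(\cdot)+\gamma(\|d\|)$.
\item \emph{Bounded search overshoot.} During a search triggered at time $t_k$, we have $E_k\lesssim\|d\|$ (by item 1). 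The growth of $x$ until recapture is therefore bounded by a $\KInf$ function of $\|d\|$ and $|x(t_k)|$, and each recapture occurs with $E$ at most a class-$\KInf$ function of these quantities.
\item \emph{The extra symbol near the origin.} When $E_k$ falls below the level where the sub-box resolution is finer than needed, the symbol $N^{n_x}+1$ lets the zoom-in continue without ever registering a spurious overflow. This gives $E_k\to$ (a gain of) $\limsup|d|$ instead of a practical offset. It is exactly what upgrades practical ISS to ISS.
\end{enumerate}

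\textbf{Assembling the bounds.} Combining items 2 and 3 yields \eqref{eq:iss-gs}. The initial search is handled by starting from $E_0=1$ and using item 3 with $|x_0|$. For \eqref{eq:iss-ag}, apply the argument from a large time $T$, on which $\|d\|_{[T,\infty)}$ is close to $\limsup|d|$, and use time invariance.

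\textbf{Main obstacle.} The main obstacle is item 3 combined with ruling out infinitely many escape/recapture cycles whose overshoots do not shrink. The first attack would be to show that each escape forces $\hat\delta$, and hence the post-capture $E$, to be bounded below by a multiple of the disturbance that caused the escape, but above by a gain of $\|d\|$. This pins $E_k$ inside a band $[\underline\gamma(\|d\|),\overline\gamma(\|d\|)]$ eventually, after which escapes stop or remain uniformly bounded.
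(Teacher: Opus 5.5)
There is a genuine gap, and it sits exactly at your ``main obstacle''. With $\hat\delta_k=\epsilon E_k$, the range contracts at the fixed rate $\Lambda/N+\Theta\epsilon$ regardless of the state, while $x$ contracts only at the closed-loop rate of $S=e^{(A+BK)\tau_s}$, which can be much slower. So $E_k$ can become tiny while $|x(t_k)|$ is still large, and a small persistent disturbance then triggers an escape.

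At that escape your item~1 bounds $E_{j-1}$ by $\|d\|$ but gives nothing for $|x(t_j)|$. In the ensuing search $u\equiv 0$, so $x$ is amplified by up to $\Lambda$ per step. After recapture the range is again at the disturbance scale, so the next stabilizing stage has bounded length. You can therefore get repeated bounded-length cycles whose net gain on $x$ exceeds one; a scalar plant with slow closed-loop decay and a small constant $d$ suffices. Then $|x|$ diverges and \eqref{eq:iss-gs} fails. Pinning $E_k$ in a band does not help, since it is $x$, not $E$, that grows.

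The missing idea is to make the disturbance estimate depend on the center as well. The paper uses $E_{k+1}=\frac{\Lambda}{N}E_k+\phi\sqrt{V_k}$ with $V_k=(x_k^*)^\top Px_k^*+\rho E_k^2$, and tunes $\psi,\rho,\phi$ so that $V_{k+1}\le\nu V_k$ (Lemma~\ref{lem:proof-stab-lya}). An escape at $t_j$ then forces $\phi\sqrt{V_{j-1}}<\Phi\|d\|_{[t_{j-1},t_j]}$, hence $|x(t_j)|,E_{j-1}\le\Gamma\|d\|_{[t_{j-1},t_j]}$ (Lemma~\ref{lem:proof-esc}). So every search-plus-restabilization cycle is bounded by a $\KInf$ function of the local disturbance alone, independent of history. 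That same fact is what yields \eqref{eq:iss-ag}. Your restart-at-$T$ argument does not: the term $\gamma_1(|x(T)|)$ never decays, and the quantizer state at $T$ is not the nominal initialization.

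Items~3 and~4 also do not hold as stated. The recapture time must be bounded in terms of $\|d\|$ alone, however small $E_{j-1}$ was. If your search estimate is seeded proportionally to $E_j$, the search lasts on the order of $\log(\|d\|/E_{j-1})$ steps, and the overshoot is unbounded for fixed $\|d\|$. Your example $E_{k+1}=\Lambda E_k+\Theta\hat\delta_k$ also recaptures only if $\hat\delta_k$ grows geometrically with ratio at least $\Lambda$. The paper instead uses growth rate $(1+\varepsilon)\Lambda$, a constant estimate $\delta$, and the modified update $\hat E_j=\frac{\Lambda}{N}E_{j-1}+\Phi\delta$ at escapes (Lemma~\ref{lem:proof-cap}). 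Consequently $E$ at recapture is at least $\Phi\delta$ and is \emph{not} a $\KInf$ function of $\|d\|$, contrary to your item~3.

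This non-vanishing range, not ``spurious overflows'', is why the extra symbol is needed. Whenever $|x(t_k)|\le E_k/N$, the paper sends the origin cell, so $c_k=0$ and $u\equiv0$ on $[t_k,t_{k+1})$. This gives $|x(t_{k+1})|\le(2\|S\|+\Lambda)|x(t_k)|+\Phi\|d\|_{[t_k,t_{k+1}]}$ independently of $E_k$. That bound is what makes $\chi^x(E,\cdot)\in\KInf$ for fixed $E$ in Lemma~\ref{lem:proof-stab-kinf}. Without it, a tiny state could be quantized to a nonzero center of size up to $E_k/N$, and the post-recapture overshoot would not vanish as $\|d\|\to0$.
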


The communication and control strategy is presented in Section~\ref{sec:strat}.
The closed-loop dynamics are derived in Sections~\ref{sec:strat} and~\ref{sec:reach}; see \eqref{eq:sys-stab} and \eqref{eq:sys-srch}.
The gain functions $ \gamma_1 $, $ \gamma_2 $, and $ \gamma_3 $ are derived in the proof of Theorem~\ref{thm:main} in Section~\ref{sec:proof}; see \eqref{eq:iss-gs-gain} and \eqref{eq:iss-ag-gain}.
In particular, $ \gamma_1 $ and $ \gamma_2 $ may depend on the choice of initial quantization parameters, but not on $ x_0 $ or $ d(\cdot) $.

\section{Communication and control strategy}\label{sec:strat}
In this section, we present the communication and control strategy used to establish Theorem~\ref{thm:main}, assuming that suitable approximations of the reachable sets of the state are available at all sampling times (their constructions are deferred to Section~\ref{sec:reach}).

The initial state $ x(0) = x_0 $ is unknown. At $ t_0 = 0 $, both the sensor and the controller are initialized with $ x^*_0 = 0 $ and an arbitrary design parameter $ E_0 > 0 $. At each sampling time $ t_k \geq 0 $, the sensor checks whether
\begin{equation}\label{eq:quant-rng}
    |x(t_k) - x^*_k| \leq E_k,
\end{equation}
that is, whether the state $ x(t_k) $ lies inside the hypercube
\begin{equation*}
    \calR_k := \{v \in \R^{n_x}: |v - x^*_k| \leq E_k\}.
\end{equation*}
The set $ \calR_k $ approximates the reachable set of the state at $ t_k $ and is used as the quantization range. If \eqref{eq:quant-rng} holds, then the state is \emph{visible} and the system is in a \emph{stabilizing stage} (Section~\ref{ssec:strat-stab}); otherwise, the state is \emph{lost} and the system is in a \emph{searching stage} (Section~\ref{ssec:strat-srch}).

If the state is visible at $ t_k $, then the system remains in a stabilizing stage until the first sampling time $ t_j > t_k $ such that $ x(t_j) \notin \calR_j $, at which point the state \emph{escapes}.
Conversely, if the state is lost at $ t_k $, then the system remains in a searching stage until the first sampling time $ t_i > t_k $ such that $ x(t_i) \in \calR_i $, at which point the state is \emph{(re)captured}.
Due to the unknown disturbance, the system may alternate between stages finitely or infinitely many times.

\subsection{Stabilizing stages}\label{ssec:strat-stab}
At each sampling time $ t_k $ in a stabilizing stage, the sensor first checks whether
\begin{equation}\label{eq:quant-cell-0}
    |x(t_k)| \leq \frac{E_k}{N}.
\end{equation}
If so, it transmits $ i_k = 1 $ to the controller. Otherwise, it partitions the range $ \calR_k $ into $ N^{n_x} $ equal hypercubic cells of radius $ E_k/N $ in the $ \infty $-norm ($ N $ per dimension), assigns each cell a unique index from $ \{ 2, \ldots, N^{n_x}+1 \}$, and transmits the index $ i_k $ of the cell containing $ x(t_k) $.

Upon receiving an $ i_k \in \{1, \ldots, N^{n_x}+1\} $, the controller infers that \eqref{eq:quant-rng} holds and reconstructs the corresponding cell center $ c_k $ using the same indexing protocol.
If $ i_k \geq 2 $, then
\begin{equation}\label{eq:quant-cell}
    |x(t_k) - c_k| \leq \frac{E_k}{N}
\end{equation}
and
\begin{equation}\label{eq:quant-cell-ctr}
    |c_k - x^*_k| \leq \frac{N - 1}{N} E_k.
\end{equation}
If $ i_k = 1 $, then $ c_k = 0 $ and \eqref{eq:quant-cell} still holds.

The controller then applies the control input
\begin{equation}\label{eq:ctrl-stab}
	u(t) = K \hat x(t), \qquad t \in [t_k, t_{k+1}),
\end{equation}
where $ K $ is the gain matrix from Assumption~\ref{ass:stabilizable}, and $ \hat x $ evolves according to the auxiliary system
\begin{equation}\label{eq:aux}
    \dot{\hat x} = A \hat x + B u 
\end{equation}
with the boundary condition
\begin{equation}\label{eq:aux-ini}
    \hat x(t_k) = c_k.
\end{equation}
That is, $ \hat x $ is updated to $ c_k $ at each sampling time $ t_k $ and is therefore only right-continuous in general.
Consequently, the controller dynamics are hybrid, comprising continuous flows and discrete jumps.

Finally, both the sensor and the controller update the quantization range by computing
\[
\begin{aligned}
    x^*_{k+1} &:= F(c_k), \\
    E_{k+1} &:= G(E_k, x_k^*)
\end{aligned}
\]
without further communication, where the functions $ F $ and $ G $ are derived in Section~\ref{ssec:reach-stab}. 

In contrast to the quantization schemes in \cite{Liberzon2003TAC,SharonLiberzon2012,Liberzon2014,YangLiberzon2018}, our scheme includes an additional quantization cell centered at the origin, corresponding to \eqref{eq:quant-cell-0}.
This design ensures that the control input vanishes whenever the state is sufficiently close to the origin, which is essential for establishing ISS.
It also preserves the equilibrium at the origin without requiring the integer $ N $ to be odd.

\subsection{Searching stages}\label{ssec:strat-srch}
At each sampling time $ t_k $ in a searching stage, the sensor transmits the ``overflow symbol'' $ i_k = 0 $.
Upon receiving $ i_k = 0 $, the controller infers that the state is lost and sets the control input to zero on $ [t_k, t_{k+1}) $.
Finally, both the sensor and the controller update the quantization range by computing
\[
\begin{aligned}
    x^*_{k+1} &:= \hatF(x^*_k), \\
    E_{k+1} &:= \hatG((1 + \varepsilon) E_k, \delta)
\end{aligned}
\]
without further communication, where the functions $ \hat F $ and $ \hat G $ are derived in Section~\ref{ssec:reach-srch}, 
and $ \varepsilon, \delta > 0 $ are arbitrary design parameters.
At an escape time $ t_j $ (i.e., the start of a searching stage), the update of $ E_{j+1} $ is modified as described in Section~\ref{ssec:reach-srch}.

\section{Approximation of reachable sets}\label{sec:reach}
In this section, we derive the update formulas used in the communication and control strategy presented in Section~\ref{sec:strat} \new{by constructing suitable approximations of the reachable sets of the state}.

\subsection{Stabilizing stages}\label{ssec:reach-stab}
Consider a sampling time $ t_k $ in a stabilizing stage.
Define the error $ e := x - \hat x $.
On $ [t_k, t_{k+1}) $, combining \eqref{eq:lin}, \eqref{eq:ctrl-stab}, and \eqref{eq:aux} yields the closed-loop system
\begin{equation}\label{eq:sys-stab}
\begin{aligned}
    \dot x &= A x + B K \hat x + D d, \\
    \dot{\hat x} &= (A + B K) \hat x,
\end{aligned}
\end{equation}
with the boundary condition \eqref{eq:aux-ini}, and thus the error dynamics
\begin{equation}\label{eq:err-stab}
	\dot e = A e + D d, \qquad |e(t_k)| = |x(t_k) - c_k| \leq \frac{E_k}{N},
\end{equation}
where the inequality follows from \eqref{eq:quant-cell}. Hence
\begin{equation}\label{eq:propag-stab-err}
\begin{aligned}
    |e(t_{k+1}^-)|
    &= \left| e^{A \tau_s} e(t_k) + \int_{t_k}^{t_{k+1}} e^{A (t_{k+1} - \tau)} D d(\tau) \dl\tau \right| \\
    &\leq \|e^{A \tau_s}\| |e(t_k)| + \bigg( \int_{0}^{\tau_s} \|e^{A s} D\| \dl s \bigg) \|d\|_{[t_k, t_{k+1}]} \\
    &\leq \frac{\Lambda}{N} E_k + \Phi \|d\|_{[t_k, t_{k+1}]},
\end{aligned}
\end{equation}
where $ \Lambda = \|e^{A \tau_s}\| $ is defined in \eqref{eq:data-rate-bnd}, and
\begin{equation}\label{eq:Phi}
	\Phi := \int_{0}^{\tau_s} \|e^{A s} D\| \dl s.
\end{equation}

We define the update formulas by
\begin{equation}\label{eq:propag-stab-x}
    x^*_{k+1} = F(c_k) := \hat x(t_{k+1}^-) = S c_k
\end{equation}
with $ S := e^{(A + B K) \tau_s} $, and
\begin{equation}\label{eq:propag-stab-E}
    E_{k+1} = G(E_k, x_k^*) := \frac{\Lambda}{N} E_k + \phi \sqrt{V_k}
\end{equation}
with
\begin{equation}\label{eq:lya-fcn}
	V_k := V(x^*_k, E_k) := (x^*_k)^\top P x^*_k + \rho E_k^2,
\end{equation}
where $ \phi, \rho > 0 $ are design parameters selected below.
The functions $ F $ and $ G $ are designed such that, if the disturbance $ \|d\|_{[t_k, t_{k+1}]} = 0 $, then $ |x(t_{k+1}) - x^*_{k+1}| \leq E_{k+1} $, so the state cannot escape at the next sampling time $ t_{k+1} $.
The term $ \phi \sqrt{V_k} $ in \eqref{eq:propag-stab-E} ensures that, if the state does escape at $ t_{k+1} $, then $ V_k $ can be bounded in terms of $ \|d\|_{[t_k,t_{k+1}]} $.

We now select three design parameters $ \psi, \rho, \phi > 0 $ sequentially so that $ V_k $ is a Lyapunov-type function with exponential decay during stabilizing stages.
Since $ A + B K $ is Hurwitz, there exist positive definite symmetric matrices $ P, Q \in \R^{n_x \times n_x} $ such that
\begin{equation}\label{eq:lya-eq}
    S^\top P S - P = -Q < 0.
\end{equation}
Define
\begin{equation}\label{eq:chi}
    \chi := \frac{2 n_x^2 \|S^\top P S\|^2}{\eigMin(Q)} + n_x \|S^\top P S\|.
\end{equation}
Since $ \Lambda < N $ by Assumption~\ref{ass:data-rate}, there exists a sufficiently small $ \psi > 0 $ such that
\begin{equation}\label{eq:psi}
    (1 + \psi) \frac{\Lambda^2}{N^2} < 1.
\end{equation}
Then there exists a sufficiently large $ \rho > 0 $ such that
\begin{equation}\label{eq:rho}
      \frac{(N - 1)^2}{N^2} \frac{\chi}{\rho} + (1 + \psi) \frac{\Lambda^2}{N^2} < 1.
\end{equation}
Finally, there exists a sufficiently small $ \phi > 0 $ such that
\begin{equation}\label{eq:nu}
\begin{aligned}
	\nu &:= \max \left\{ 1 - \frac{\eigMin(Q)}{2 \eigMax(P)},\, \frac{(N-1)^2}{N^2}\frac{\chi}{\rho} + (1 + \psi) \frac{\Lambda^2}{N^2} \right\} \\
	&\;\quad\, + \left( 1 + \frac{1}{\psi} \right) \phi^2 \rho
\end{aligned}
\end{equation}
satisfies $ \nu < 1 $.

\subsection{Searching stages}\label{ssec:reach-srch}
Consider a sampling time $ t_k $ in a searching stage.
On $ [t_k, t_{k+1}) $, combining \eqref{eq:lin} and \eqref{eq:aux} with $ u \equiv 0 $ and $ \hat x(t_k) = x_k^* $ yields the closed-loop system
\begin{equation}\label{eq:sys-srch}
\begin{aligned}
    \dot x &= A x + D d, \\
    \dot{\hat x} &= A \hat x,
\end{aligned}
\end{equation}
and thus the error dynamics
\[
	\dot e = A e + D d, \qquad e(t_k) = x(t_k) - x^*_k.
\]
Hence
\begin{equation}\label{eq:propag-srch-err}
\begin{aligned}
    |e(t_{k+1}^-)|
    &= \left| e^{A \tau_s} e(t_k) + \int_{t_k}^{t_{k+1}} e^{A (t_{k+1} - \tau)} D d(\tau) \dl\tau \right| \\
    &\leq \|e^{A \tau_s}\| |e(t_k)| + \bigg( \int_{0}^{\tau_s} \|e^{A s} D\| \dl s \bigg) \|d\|_{[t_k, t_{k+1}]} \\
    &\leq \Lambda |e(t_k)| + \Phi \|d\|_{[t_k, t_{k+1}]},
\end{aligned}
\end{equation}
where $ \Lambda $ and $ \Phi $ are defined in \eqref{eq:data-rate-bnd} and \eqref{eq:Phi}, respectively.

We define the update formulas by
\begin{equation}\label{eq:propag-srch-x}
    x^*_{k+1} = \hat F(x^*_k) := \hat x(t_{k+1}^-) = \hat S x^*_k
\end{equation}
with $ \hatS := e^{A \tau_s} $, and
\begin{equation}\label{eq:propag-srch-E}
    E_{k+1} = \hat G((1 + \varepsilon) E_k, \delta) := (1 + \varepsilon) \Lambda E_k + \Phi \delta,
\end{equation}
where $ \varepsilon, \delta > 0 $ are the arbitrary design parameters introduced in Section~\ref{ssec:strat-srch}.
The functions $ \hat F $ and $ \hat G $ are designed such that
\begin{equation*}\label{eq:quant-rng-propag-srch}
    |x(t_{k+1}) - x_{k+1}^*| \le \hat G(|x(t_k)-x_k^*|, \|d\|_{[t_k,t_{k+1}]}).
\end{equation*}
Consequently, the factor $ 1 + \varepsilon $ in \eqref{eq:propag-srch-E} ensures that the growth of $ E_k $ eventually outpaces that of $|x(t_k)-x_k^*|$, thereby guaranteeing finite-time (re)capture of the state.
The parameter $ \delta $ serves as a constant disturbance estimate used only during searching stages.%
\footnote{\new{In contrast, \cite{YangLiberzon2018} uses an increasing sequence of constants to estimate the disturbance during both stabilizing and searching stages, yielding only practical ISS. This distinction is one of the key ingredients for establishing ISS in the present paper.}}

At an escape time $ t_j $ (i.e., the start of a searching stage), \new{the update of $ E_{j+1} $ is modified.
Instead of applying \eqref{eq:propag-srch-E} with $ E_j = G(E_{j-1}, x_{j-1}^*) $,} we define
\begin{equation}\label{eq:propag-srch-E-esc}
    E_{j+1} := \hat G((1 + \varepsilon) \hat E_j, \delta)
\end{equation}
with
\[
	\hat E_j := \frac{\Lambda}{N} E_{j-1} + \Phi \delta.
\]
This modification ensures that the state is recaptured at $ t_{j+1} $ whenever $ \|d\|_{[t_{j-1},t_{j+1}]} \leq \delta $, which is later used \new{in the proof of Lemma~\ref{lem:proof-cap}} to bound the recapture time in terms of the disturbance.

\section{Stability analysis}\label{sec:proof}
In this section, we establish Theorem~\ref{thm:main}.
We first summarize several key intermediate results as technical lemmas in Sections~\ref{ssec:proof-stab} and~\ref{ssec:proof-srch};
their proofs are given in Appendix~\ref{appx:lem}.
Section~\ref{ssec:proof-main} then completes the proof of Theorem~\ref{thm:main}.

Throughout the analysis, we assume that $ \|d\|_{[0, \infty)} $ is finite, since otherwise the ISS bounds \eqref{eq:iss-gs} and \eqref{eq:iss-ag} hold trivially.
We also assume that $ \Lambda > 1 $.%
\footnote{Note that $ \Lambda = \|e^{A \tau_s}\| \geq 1 $, and equality holds only if all eigenvalues of $ A $ have nonpositive real parts. The case $ \Lambda = 1 $ can be addressed by replacing $ \Lambda $ with $ \max\{\|e^{A \tau_s}\|, 1 + \epsilon_\Lambda\} $ for an arbitrarily small $ \epsilon_\Lambda > 0 $.}

\subsection{Stabilizing stages}\label{ssec:proof-stab}
We first establish that, \new{during a stabilizing stage,} the function $ V_k := V(x^*_k, E_k) $ defined in \eqref{eq:lya-fcn} is a Lyapunov-type function with exponential decay.
\begin{lem}\label{lem:proof-stab-lya}
For any sampling time $ t_k $ in a stabilizing stage,
\begin{equation}\label{eq:proof-stab-lya}
    V_{k+1} \leq \nu V_k,
\end{equation}
where $ \nu \in (0, 1) $ is defined in \eqref{eq:nu}.
\end{lem}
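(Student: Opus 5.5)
The plan is to expand $V_{k+1}$ directly from the update formulas \eqref{eq:propag-stab-x} and \eqref{eq:propag-stab-E}, and to bound its two parts separately by multiples of $(x^*_k)^\top P x^*_k$ and $\rho E_k^2$. Those multiples are exactly the two entries of the maximum in \eqref{eq:nu}. I write $M := S^\top P S$, so that by \eqref{eq:lya-eq}
\[
	V_{k+1} = c_k^\top M c_k + \rho E_{k+1}^2 .
\]

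\emph{Step 1: the quadratic part.} If $i_k = 1$, then $c_k = 0$ and this term vanishes, so only Step 2 is needed. If $i_k \geq 2$, I write $c_k = x^*_k + w$, where \eqref{eq:quant-cell-ctr} gives $|w| \leq \frac{N-1}{N} E_k$. Then
\[
	c_k^\top M c_k = (x^*_k)^\top M x^*_k + 2 (x^*_k)^\top M w + w^\top M w .
\]
By \eqref{eq:lya-eq}, the first term equals $(x^*_k)^\top P x^*_k - (x^*_k)^\top Q x^*_k$.

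For the other two terms I use the elementary $\infty$-norm bound
\[
	|a^\top M b| \leq |a|_1 \, |M b|_\infty \leq n_x |a| \, \|M\| \, |b| .
\]
It gives $w^\top M w \leq n_x \|M\| |w|^2$ for the last term. For the cross term I apply Young's inequality with weight $\eigMin(Q)/2$:
\[
	2 n_x \|M\| |x^*_k| |w| \leq \frac{\eigMin(Q)}{2} |x^*_k|^2 + \frac{2 n_x^2 \|M\|^2}{\eigMin(Q)} |w|^2 .
\]
The $\infty$-norm is dominated by the Euclidean norm, so $\frac{\eigMin(Q)}{2}|x^*_k|^2 \leq \frac12 (x^*_k)^\top Q x^*_k$.

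Collecting terms and using the definition \eqref{eq:chi} of $\chi$ gives
\[
	c_k^\top M c_k \leq (x^*_k)^\top P x^*_k - \tfrac12 (x^*_k)^\top Q x^*_k + \chi |w|^2 .
\]
Finally, $(x^*_k)^\top Q x^*_k \geq \frac{\eigMin(Q)}{\eigMax(P)} (x^*_k)^\top P x^*_k$ and $\chi |w|^2 \leq \frac{(N-1)^2}{N^2} \frac{\chi}{\rho} \, \rho E_k^2$. Hence
\[
	c_k^\top M c_k \leq \Big(1 - \frac{\eigMin(Q)}{2\eigMax(P)}\Big) (x^*_k)^\top P x^*_k + \frac{(N-1)^2}{N^2}\frac{\chi}{\rho}\, \rho E_k^2 .
\]
This is the step needing the most care, since the $\infty$-norm constants have to match the definition of $\chi$ exactly. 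I expect it to be the main obstacle, but only as bookkeeping.

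\emph{Step 2: the range part.} By \eqref{eq:propag-stab-E} and the inequality $(a+b)^2 \leq (1+\psi)a^2 + (1+1/\psi) b^2$,
\[
	\rho E_{k+1}^2 \leq (1+\psi)\frac{\Lambda^2}{N^2}\, \rho E_k^2 + \Big(1+\frac{1}{\psi}\Big)\phi^2 \rho\, V_k .
\]

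\emph{Step 3: combine.} Adding the bounds from Steps 1 and 2, the coefficients of $(x^*_k)^\top P x^*_k$ and of $\rho E_k^2$ are each at most the maximum appearing in \eqref{eq:nu}. Since these two quantities sum to $V_k$, the total is at most that maximum times $V_k$, plus the term $(1+1/\psi)\phi^2\rho V_k$. This yields $V_{k+1} \leq \nu V_k$.

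The choices \eqref{eq:psi}–\eqref{eq:nu} give $\nu < 1$. Also $\nu > 0$, because it contains the strictly positive summand $(1+1/\psi)\phi^2\rho$.
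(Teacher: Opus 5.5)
Your proof is correct and follows the paper's argument essentially step for step: the same decomposition $c_k = x^*_k + w$ (the paper's $\Delta_k$) with the $Q$-term split in half, the same $\infty$-norm bounds on the cross and quadratic terms (your weighted Young's inequality is exactly the paper's completing-the-square step producing $\chi$), the same Young bound on $E_{k+1}^2$, and the same separate treatment of the case $c_k = 0$. Your case split on $i_k$ rather than on $c_k$ is slightly cleaner, since the cell-center bound $|c_k - x^*_k| \leq \frac{N-1}{N} E_k$ is available exactly when $i_k \geq 2$.
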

\begin{proof}
See Appendix~\ref{appx:proof-stab-lya}.
\end{proof}

We now relate $V_k$ to $ |x(t_k)| $, $ |x(t_{k+1})| $, and $ E_k $.
\begin{lem}\label{lem:proof-stab-lya-bnd-x}
There exist constants $ C_1, C_2, C_3 > 0 $ such that, for any sampling time $ t_k $ in a stabilizing stage,
\begin{align}
    \sqrt{V_k} &\leq C_1 (|x(t_k)| + E_k), \label{eq:proof-stab-lya-bnd-x}\\
    |x(t_k)| &\leq C_2 \sqrt{V_k}, \label{eq:proof-stab-x-bnd-lya}\\
	|x(t_{k+1})| &\leq C_3 \sqrt{V_k} + \Phi \|d\|_{[t_k, t_{k+1}]}, \label{eq:proof-stab-x-bnd-lya-propag}
\end{align}
\end{lem}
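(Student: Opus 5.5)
All three bounds follow from finite-dimensional norm equivalence, the visibility condition \eqref{eq:quant-rng}, and the one-step error propagation \eqref{eq:propag-stab-err}. Throughout I will use that $P$ is positive definite, so
\[
\eigMin(P)|v|^2 \leq \eigMin(P)|v|_2^2 \leq v^\top P v \leq \eigMax(P)|v|_2^2 \leq n_x\eigMax(P)|v|^2
\]
for all $v$, where $|\cdot|_2$ is the Euclidean norm and $|v| \leq |v|_2 \leq \sqrt{n_x}|v|$.

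For \eqref{eq:proof-stab-lya-bnd-x}, the goal is to bound $\sqrt{V_k}$ by the terms in $x^*_k$ and $E_k$ separately. Since $\sqrt{a+b}\leq\sqrt a+\sqrt b$,
\[
\sqrt{V_k}\leq \sqrt{n_x\eigMax(P)}\,|x^*_k| + \sqrt\rho\,E_k .
\]
Visibility at $t_k$ gives $|x^*_k| \leq |x(t_k)| + E_k$. Substituting and taking $C_1 := \sqrt{n_x\eigMax(P)}+\sqrt\rho$ yields the claim.

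For \eqref{eq:proof-stab-x-bnd-lya}, I use \eqref{eq:quant-rng} in the other direction: $|x(t_k)| \leq |x^*_k| + E_k$. Each term is controlled by $\sqrt{V_k}$, since $|x^*_k| \leq \sqrt{V_k/\eigMin(P)}$ and $E_k \leq \sqrt{V_k/\rho}$. Hence $C_2 := \eigMin(P)^{-1/2}+\rho^{-1/2}$ works.

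For \eqref{eq:proof-stab-x-bnd-lya-propag}, the approach is to split $x(t_{k+1}) = x(t_{k+1}^-)$ by continuity of $x$. I write $x(t_{k+1}^-) = \hat x(t_{k+1}^-) + e(t_{k+1}^-)$.
\begin{itemize}
\item By \eqref{eq:propag-stab-x}, the first term equals $Sc_k$, so $|\hat x(t_{k+1}^-)| \leq \|S\|\,|c_k|$.
\item By \eqref{eq:propag-stab-err}, $|e(t_{k+1}^-)| \leq \frac{\Lambda}{N}E_k + \Phi\|d\|_{[t_k,t_{k+1}]}$.
\end{itemize}

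It then remains to bound $|c_k|$ by $\sqrt{V_k}$, which requires a case split on the symbol.
\begin{itemize}
\item If $i_k = 1$, then $c_k = 0$.
\item If $i_k \geq 2$, then \eqref{eq:quant-cell-ctr} gives $|c_k| \leq |x^*_k| + \frac{N-1}{N}E_k \leq |x^*_k| + E_k \leq C_2\sqrt{V_k}$, using the same bounds as in the second step.
\end{itemize}
In both cases $|c_k| \leq C_2\sqrt{V_k}$. Also $\frac{\Lambda}{N}E_k \leq E_k \leq \rho^{-1/2}\sqrt{V_k}$ because $\Lambda<N$. Combining, $C_3 := \|S\|C_2 + \rho^{-1/2}$ suffices.

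The only delicate point is the third bound. One must remember that $\hat x$ jumps at $t_{k+1}$ while $x$ is continuous, so $x(t_{k+1})$ equals the left limit. One must also treat the extra origin cell $i_k=1$ separately, since \eqref{eq:quant-cell-ctr} is not asserted in that case. Everything else is routine norm equivalence, and no disturbance term enters the first two bounds because they involve only the state at the visible time $t_k$.
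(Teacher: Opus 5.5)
Your proposal is correct and follows the paper's proof essentially step for step: the same $\sqrt{a+b}\le\sqrt a+\sqrt b$ and norm-equivalence bounds with the same $C_1$ and $C_2$, and the same decomposition $x(t_{k+1}) = \hat x(t_{k+1}^-) + e(t_{k+1}^-)$ with a separate treatment of the origin cell. The only differences are cosmetic. You absorb the factors $\frac{N-1}{N}$ and $\frac{\Lambda}{N}$ into $1$, which gives a slightly looser $C_3$, and you split on $i_k$ rather than on whether $c_k = 0$.
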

\begin{proof}
See Appendix~\ref{appx:proof-stab-lya-bnd-x}.
\end{proof}

The next two lemmas provide an exponentially decaying state bound and a state bound with suitable $ \KInf $ properties.
\begin{lem}\label{lem:proof-stab-exp}
There exist constants $ C, \lambda > 0 $ such that, for any two sampling times $ t_k > t_l $ from the same stabilizing stage,
\begin{equation}\label{eq:proof-stab-exp}
	|x(t_k)| \leq C e^{-\lambda (k - l)} (|x(t_l)| + E_l) + \Phi \|d\|_{[t_{k-1}, t_k]}.
\end{equation}
\end{lem}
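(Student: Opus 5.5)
The plan is to chain the three earlier lemmas: use Lemma~\ref{lem:proof-stab-lya-bnd-x} to pass between the state and $V$, and Lemma~\ref{lem:proof-stab-lya} to iterate the decay of $V$ across the stabilizing stage. Since $t_l < t_k$ lie in the same stabilizing stage, every sampling time $t_l, t_{l+1}, \ldots, t_{k-1}$ belongs to that stage. (The time $t_k$ also lies in it, but we will not need that.)

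First, apply \eqref{eq:proof-stab-lya} repeatedly over $t_l, \ldots, t_{k-1}$ to get
\[
V_{k-1} \leq \nu^{k-1-l} V_l .
\]
Next, apply \eqref{eq:proof-stab-x-bnd-lya-propag} at the index $k-1$, which is in the stage. This bounds the state at the later sampling time and carries the disturbance term on the last interval only:
\[
|x(t_k)| \leq C_3 \sqrt{V_{k-1}} + \Phi \|d\|_{[t_{k-1}, t_k]} \leq C_3 \nu^{(k-1-l)/2} \sqrt{V_l} + \Phi \|d\|_{[t_{k-1}, t_k]} .
\]
Finally, use \eqref{eq:proof-stab-lya-bnd-x} at $t_l$, namely $\sqrt{V_l} \leq C_1(|x(t_l)| + E_l)$. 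With
\[
\lambda := -\tfrac{1}{2}\ln \nu > 0 \quad \text{and} \quad C := C_1 C_3 \nu^{-1/2},
\]
this gives exactly \eqref{eq:proof-stab-exp}. Here $\lambda > 0$ follows from $\nu \in (0,1)$.

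The only point needing care is bookkeeping. The propagation bound \eqref{eq:proof-stab-x-bnd-lya-propag} must be applied at the index $k-1$ rather than at $k$, so that the disturbance enters only through $\|d\|_{[t_{k-1},t_k]}$ and the exponent is $k-1-l \geq 0$. The resulting factor $\nu^{-1/2}$ is absorbed into $C$. The degenerate case $k = l+1$ needs no separate treatment, because the iteration is then empty and the estimate reduces directly to Lemma~\ref{lem:proof-stab-lya-bnd-x}. I expect no real obstacle beyond making sure every index used lies inside the stabilizing stage, which is guaranteed by the hypothesis that $t_l$ and $t_k$ belong to the same stage.
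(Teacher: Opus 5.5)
Your proposal is correct and follows essentially the same route as the paper's proof. You iterate Lemma~\ref{lem:proof-stab-lya} to get $V_{k-1} \leq \nu^{k-1-l} V_l$, apply \eqref{eq:proof-stab-x-bnd-lya-propag} at $t_{k-1}$ and \eqref{eq:proof-stab-lya-bnd-x} at $t_l$, and take the same constants $C = C_1 C_3 \nu^{-1/2}$ and $\lambda = -\tfrac{1}{2}\ln\nu$.
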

\begin{proof}
See Appendix~\ref{appx:proof-stab-exp}.
\end{proof}

\begin{lem}\label{lem:proof-stab-kinf}
There exist continuous functions $ \chi^x, \chi^d: \R_{> 0} \times \R_{\geq 0} \to \R_{\geq 0} $ such that,
\begin{enumerate*}[a)]
\item 
for each fixed $ s > 0 $, $ \chi^x(\cdot, s) $ and $ \chi^d(\cdot, s) $ are nondecreasing,
\item 
for each fixed $ E > 0 $, $ \chi^x(E, \cdot), \chi^d(E, \cdot) \in \KInf $, and
\item 
for any two sampling times $ t_k \geq t_l $ from the same stabilizing stage,
\end{enumerate*}
\begin{equation}\label{eq:proof-stab-kinf}
	|x(t_k)| \leq \chi^x(E_l, |x(t_l)|) + \chi^d(E_l, \|d\|_{[t_l, t_k]}).
\end{equation}
\end{lem}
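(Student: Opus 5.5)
The plan is to improve on Lemma~\ref{lem:proof-stab-exp}, whose bound $C(|x(t_l)|+E_l)$ does not vanish as $|x(t_l)|\to 0$ with $E_l$ fixed. The extra quantization cell \eqref{eq:quant-cell-0} is what should make the bound vanish. Fix a stabilizing stage and $t_k \ge t_l$ in it. Write $a := |x(t_l)|$, $\bar d := \|d\|_{[t_l,t_k]}$, and $c_\Lambda := \Phi/(\Lambda-1)$. I would split into cases according to whether and when the zero cell is left.

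\emph{Case 1: $|x(t_l)| > E_l/N$.} Then $E_l < N a$. Lemma~\ref{lem:proof-stab-exp} (or \eqref{eq:proof-stab-x-bnd-lya} with Lemma~\ref{lem:proof-stab-lya} and \eqref{eq:proof-stab-lya-bnd-x} when $k=l$) gives $|x(t_k)| \le C(1+N)a + \Phi \bar d$. This bound is already of the required form.

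\emph{Case 2: $|x(t_l)| \le E_l/N$.} Let $m$ be the first index in $\{l,\dots,k\}$ with $|x(t_m)| > E_m/N$; if none exists, set $m := k+1$. I would collect three facts about the indices $l \le j < m$.

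First, $i_j = 1$, so $c_j = 0$, $\hat x \equiv 0$ and $u \equiv 0$ on $[t_j,t_{j+1})$. Hence $x(t_{j+1}) = e^{A\tau_s}x(t_j) + \int e^{A(t_{j+1}-\tau)}Dd$, and induction gives
\[
|x(t_j)| \le \Lambda^{j-l}(a + c_\Lambda \bar d).
\]

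Second, from Lemma~\ref{lem:proof-stab-lya}, \eqref{eq:lya-fcn} and \eqref{eq:proof-stab-lya-bnd-x},
\[
E_j \le \sqrt{V_j/\rho} \le \nu^{(j-l)/2}\,C_1 (a+E_l)/\sqrt\rho \le \nu^{(j-l)/2}\,C_1(1+1/N)E_l/\sqrt\rho .
\]
Hence $|x(t_j)| \le E_j/N \le C' \nu^{(j-l)/2}E_l$ for all $l \le j < m$.

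Third, if $m \le k$, the one-step propagation from $t_{m-1}$ gives $|x(t_m)| \le \Lambda E_{m-1}/N + \Phi\bar d$ and also $|x(t_m)| \le \Lambda^{m-l}(a+c_\Lambda\bar d)$.

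Combining the first two facts yields, for every $j$ with $l \le j \le \min\{m,k\}$,
\[
|x(t_j)| \le \min\{\Lambda^{n}(a+c_\Lambda\bar d),\; C''\nu^{n/2}E_l\} + \Phi\bar d, \qquad n = j-l .
\]
I then maximize the minimum over $n \ge 0$ as if $n$ were real. The crossover gives a bound of the form $\beta(E_l, a + c_\Lambda \bar d)$ with
\[
\beta(E,r) = r^{\theta}(C''E)^{1-\theta} + r, \qquad \theta = \frac{\log \nu^{-1/2}}{\log\Lambda + \log\nu^{-1/2}} \in (0,1).
\]
The extra term $+r$ absorbs integer rounding and the case $n=0$. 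This $\beta$ is continuous, nondecreasing in $E$, and of class $\KInf$ in $r$ for fixed $E>0$. If $m > k$, this already bounds $|x(t_k)|$. If $m \le k$, I apply Case~1 from $t_m$: $|x(t_k)| \le C(1+N)|x(t_m)| + \Phi\bar d$.

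Finally, I split $\beta(E_l, a+c_\Lambda\bar d) \le \beta(E_l,2a) + \beta(E_l, 2c_\Lambda\bar d)$, using monotonicity and $\beta(E,r+s) \le \beta(E,2r)+\beta(E,2s)$. I then take $\chi^x(E,s)$ and $\chi^d(E,s)$ to be suitable constant multiples of $\beta(E,2s)$ and $\beta(E,2c_\Lambda s)$, plus the linear terms $C(1+N)s$ and $C\Phi s$ respectively. These cover both cases and inherit properties a) and b) from $\beta$.

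The main obstacle is Case~2, the step producing a bound that vanishes with $a$. The trade-off between open-loop growth $\Lambda^n$ of a small state and geometric shrinking $\nu^{n/2}$ of $E_j$ is the crux. The nuisance is that $x^*_l$ need not be $0$, which is why I bound $E_j$ via $V_l$ rather than via the zero-cell recursion for $E$. I also need to check that disturbance contributions accumulated over the zero-cell run stay linear in $\bar d$ through the factor $c_\Lambda$, which uses $\Lambda > 1$.
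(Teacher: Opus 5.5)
Your proof is correct. It uses the same mechanism as the paper's proof: geometric growth of a small state is traded against the decay $V_j\le\nu^{j-l}V_l$, and the zero cell \eqref{eq:quant-cell-0} makes the bound vanish as $|x(t_l)|\to 0$. The decomposition, however, is different.

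The paper splits according to whether $|x(t_l)|$ or $\Phi\|d\|_{[t_l,t_k]}$ dominates. In each case it uses a single per-step growth bound $|x(t_{k+1})|\le H|x(t_k)|+\Phi\|d\|_{[t_k,t_{k+1}]}$ with $H=2\|S\|+\Lambda$. This bound holds at every step of a stabilizing stage, whichever cell is used, because a failed zero-cell test gives $E_k<N|x(t_k)|$, which controls both $|c_k|$ and the quantization error. The paper then uses an explicit crossover index $l_x=\max\{\lceil\kappa\log_\nu|x(t_l)|\rceil,0\}$ to separate a short-time growth regime from a long-time decay regime.

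You instead split on the zero-cell test at $t_l$. Outside the zero cell, $E_l<N|x(t_l)|$ turns Lemma~\ref{lem:proof-stab-exp} into a linear bound that does not depend on $E_l$. Inside it, you follow only the initial open-loop run, with growth factor $\Lambda$ and $E_j$ decaying from a level comparable to $E_l$. At the first exit time $t_m$ you restart with the first case.

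What your route buys: state and disturbance are handled in one pass through $r=a+c_\Lambda\bar d$ and $\beta(E,r+s)\le\beta(E,2r)+\beta(E,2s)$. Both $\chi^x$ and $\chi^d$ therefore come out explicitly, whereas the paper omits its disturbance-dominant case. The gain also has a clean power-law form. What the paper's route buys: no first-exit bookkeeping, since its growth bound holds uniformly along the stage.

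Two small remarks:
\begin{itemize}
\item The bound at $j=m$ uses your third fact, not just the first two.
\item The extra ``$+r$'' in $\beta$ is unnecessary. By the choice of $\theta$, $\min\{\Lambda^n r,\,C''\nu^{n/2}E\}\le(\Lambda^n r)^{\theta}(C''\nu^{n/2}E)^{1-\theta}=r^{\theta}(C''E)^{1-\theta}$ for every $n\ge 0$, integer or not.
\end{itemize}
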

\begin{proof}
See Appendix~\ref{appx:proof-stab-kinf}.
\end{proof}

Finally, we bound the state and the quantization radius at escape times in terms of the disturbance over the preceding sampling interval, independently of the initial state.
\begin{lem}\label{lem:proof-esc}
There exists a constant $ \Gamma > 0 $ such that, for any sampling time $ t_j $ at which the state escapes,
\begin{equation}\label{eq:proof-esc}
	|x(t_j)| \leq \Gamma \|d\|_{[t_{j-1}, t_j]}, \qquad E_{j-1} \leq \Gamma \|d\|_{[t_{j-1}, t_j]}.
\end{equation}
\end{lem}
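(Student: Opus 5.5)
At an escape time $t_j$, the sampling time $t_{j-1}$ belongs to a stabilizing stage. We have $x(t_{j-1}) \in \calR_{j-1}$, but $x(t_j) \notin \calR_j$, so $|x(t_j) - x^*_j| > E_j$. The plan is to turn this strict inequality into a lower bound on $\|d\|_{[t_{j-1}, t_j]}$ in terms of $E_{j-1}$ and $\sqrt{V_{j-1}}$. The design term $\phi\sqrt{V_k}$ in \eqref{eq:propag-stab-E} is exactly what makes this possible.

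Since $\hat x$ flows continuously on $[t_{j-1}, t_j)$, we have $x^*_j = \hat x(t_j^-)$ by \eqref{eq:propag-stab-x}. Continuity of $x$ gives
\[
x(t_j) - x^*_j = x(t_j) - \hat x(t_j^-) = e(t_j^-).
\]
Combining \eqref{eq:propag-stab-err} at $k = j-1$ with the escape condition and \eqref{eq:propag-stab-E} yields
\[
\frac{\Lambda}{N} E_{j-1} + \phi \sqrt{V_{j-1}} = E_j < |e(t_j^-)| \leq \frac{\Lambda}{N} E_{j-1} + \Phi \|d\|_{[t_{j-1}, t_j]}.
\]
Cancelling the common term gives $\phi \sqrt{V_{j-1}} < \Phi \|d\|_{[t_{j-1}, t_j]}$. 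The only subtlety is confirming the escape convention: the comparison at $t_j$ uses $x^*_j, E_j$ computed from the stabilizing update, not the modified searching update \eqref{eq:propag-srch-E-esc}, which only affects $E_{j+1}$.

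Both bounds in \eqref{eq:proof-esc} then follow from \eqref{eq:lya-fcn}. For the radius, $\rho E_{j-1}^2 \leq V_{j-1}$, so
\[
E_{j-1} \leq \frac{\sqrt{V_{j-1}}}{\sqrt{\rho}} \leq \frac{\Phi}{\phi\sqrt{\rho}} \|d\|_{[t_{j-1}, t_j]}.
\]
For the state, apply \eqref{eq:proof-stab-x-bnd-lya-propag} of Lemma~\ref{lem:proof-stab-lya-bnd-x} with $k = j-1$. This is legitimate because $t_{j-1}$ is in a stabilizing stage:
\[
|x(t_j)| \leq C_3 \sqrt{V_{j-1}} + \Phi \|d\|_{[t_{j-1}, t_j]} \leq \Bigl(\frac{C_3 \Phi}{\phi} + \Phi\Bigr) \|d\|_{[t_{j-1}, t_j]}.
\]
Taking $\Gamma := \max\{ \Phi/(\phi\sqrt{\rho}),\, \Phi(1 + C_3/\phi) \}$ completes the argument. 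If $\Phi = 0$, for instance when $D = 0$, the strict inequality above is impossible because $V_{j-1} > 0$ whenever $E_{j-1} > 0$. Escapes then never occur and the statement holds vacuously.

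I expect no real obstacle. The only point needing care is the identification $x(t_j) - x^*_j = e(t_j^-)$, which uses the right-continuous jump convention for $\hat x$. One should also check that $E_{j-1} > 0$ throughout, so that the strict inequality forces $\|d\|_{[t_{j-1}, t_j]} > 0$.
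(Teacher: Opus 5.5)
Your proposal is correct and matches the paper's proof step for step. The escape condition, combined with \eqref{eq:propag-stab-err} and \eqref{eq:propag-stab-E} at $t_{j-1}$, gives $\phi\sqrt{V_{j-1}} < \Phi\|d\|_{[t_{j-1},t_j]}$; both bounds then follow from $\rho E_{j-1}^2 \le V_{j-1}$ and \eqref{eq:proof-stab-x-bnd-lya-propag}, with the same constant $\Gamma$. Your two extra remarks are small refinements the paper glosses over: writing $e(t_j^-)$ rather than $e(t_j)$, and noting that $\Phi = 0$ rules out escapes, so that $\Gamma > 0$ can be chosen arbitrarily in that case.
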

\begin{proof}
See Appendix~\ref{appx:proof-esc}.
\end{proof}

\subsection{Searching stages}\label{ssec:proof-srch}
We first establish that if the state is lost at $ t_0 = 0 $, then it is guaranteed to be captured in finite time.
\begin{lem}\label{lem:proof-cap-ini}
If the state is lost at $ t_0 = 0 $, then it is captured at some sampling time $ t_{i_0} $ satisfying
\begin{equation}\label{eq:proof-cap-ini}
	i_0 \leq \max \left\{ \eta_x \left( \frac{|x_0|}{E_0} \right),\, \eta_d \left( \frac{\|d\|_{[0, t_{i_0}]}}{\delta} \right) \right\}
\end{equation}
with
\begin{align*}
    \eta_x(s) &:= \begin{cases}
        \lceil \log_{1 + \varepsilon} s \rceil, &s > 1, \\
        0, &0 \leq s \leq 1,
    \end{cases} \\
    \eta_d(s) &:= \begin{cases}
        \lceil \log_{1 + \varepsilon}(r_\varepsilon s) \rceil, &s > 1, \\
        0, &0 \leq s \leq 1,
    \end{cases}
\end{align*}
where
\[
	r_\varepsilon := \frac{\hat\Lambda - 1}{\Lambda - 1}, \qquad \hat\Lambda := (1 + \varepsilon) \Lambda.
\]
\end{lem}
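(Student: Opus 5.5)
The plan is to compare, along the initial searching stage, the growth of the estimation error $ e_k := x(t_k) - x^*_k $ with the growth of the quantization radius $ E_k $. I will write both as closed-form geometric sums and show that the radius wins at the index on the right-hand side of \eqref{eq:proof-cap-ini}.

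Suppose the state is lost at $ t_0 $, and let $ i_0 \in \Z_{>0} \cup \{\infty\} $ be the first capture index. For every $ k < i_0 $ the system is in a searching stage. Since $ t_0 $ is not an escape time, the unmodified update \eqref{eq:propag-srch-E} applies at every step, with no use of \eqref{eq:propag-srch-E-esc}. Together with $ x^*_0 = 0 $, this yields the following for all $ k \le i_0 $:
\[
E_k = \hat\Lambda^k E_0 + \Phi\delta\,\frac{\hat\Lambda^k - 1}{\hat\Lambda - 1}.
\]
Iterating \eqref{eq:propag-srch-err}, and using $ e(t_{k+1}) = e(t_{k+1}^-) $ because \eqref{eq:propag-srch-x} sets $ x^*_{k+1} = \hat x(t_{k+1}^-) $ and $ x $ is continuous, gives
\[
|e_k| \le \Lambda^k |x_0| + \Phi\, \|d\|_{[0,t_k]}\, \frac{\Lambda^k - 1}{\Lambda - 1}.
\]
Here $ \Lambda > 1 $ is the standing assumption. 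Consequently, for any $ k \le i_0 $, the capture condition $ |e_k| \le E_k $ holds provided both of the following hold:
\begin{enumerate*}[(i)]
\item $ \Lambda^k |x_0| \le \hat\Lambda^k E_0 $;
\item $ \|d\|_{[0,t_k]} (\Lambda^k - 1)/(\Lambda - 1) \le \delta (\hat\Lambda^k - 1)/(\hat\Lambda - 1) $.
\end{enumerate*}

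Condition (i) is equivalent to $ (1+\varepsilon)^k \ge |x_0|/E_0 $, so it holds for all $ k \ge \eta_x(|x_0|/E_0) $. For (ii), write $ s := \|d\|_{[0,t_k]}/\delta $ and split into two cases.
\begin{enumerate*}[(a)]
\item If $ s \le 1 $, (ii) holds for every $ k $, because $ (\Lambda^k-1)/(\Lambda-1) = \sum_{m<k} \Lambda^m \le \sum_{m<k} \hat\Lambda^m $.
\item If $ s > 1 $, I use the elementary inequality $ \hat\Lambda^k - 1 \ge (1+\varepsilon)^k (\Lambda^k - 1) $, which follows from $ (1+\varepsilon)^k \ge 1 $. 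Then (ii) reduces to $ s \le (1+\varepsilon)^k/r_\varepsilon $, which holds whenever $ k \ge \eta_d(s) $.
\end{enumerate*}
Finally, both $ \eta_x $ and $ \eta_d $ are nondecreasing.

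The step I expect to be the main obstacle is a circularity: the bound \eqref{eq:proof-cap-ini} involves $ \|d\|_{[0,t_{i_0}]} $, which itself depends on $ i_0 $, and I must also rule out $ i_0 = \infty $. I would argue by contradiction. Set $ D := \|d\|_{[0,t_{i_0}]} $; if $ i_0 = \infty $, take $ D = \|d\|_{[0,\infty)} $, which is finite by the standing assumption. Let $ m := \max\{\eta_x(|x_0|/E_0), \eta_d(D/\delta)\} < \infty $. If $ i_0 > m $, then $ m \le i_0 $, so the closed-form expressions above are valid at $ k = m $. Moreover $ \|d\|_{[0,t_m]} \le D $, so monotonicity of $ \eta_d $ shows that (i) and (ii) both hold at $ k = m $. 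Hence $ |e_m| \le E_m $, which means the state is captured at $ t_m $ with $ m < i_0 $, contradicting the minimality of $ i_0 $. Therefore $ i_0 \le m $. In particular $ i_0 $ is finite, which gives \eqref{eq:proof-cap-ini}.
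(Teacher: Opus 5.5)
Your proof is correct and follows the paper's argument: the same closed-form geometric sums for $|x(t_k)-x^*_k|$ and $E_k$, the same split into an initial-state condition and a disturbance condition, the same case distinction on $\|d\|$ versus $\delta$, and the same inequality $\hat\Lambda^k - 1 \geq (1+\varepsilon)^k(\Lambda^k-1)$. The only difference is how you resolve the self-reference through $\|d\|_{[0,t_{i_0}]}$. The paper introduces the smallest positive integer $i_0^*$ with $i_0^* \geq \max\{\eta_x(|x_0|/E_0), \eta_d(\|d\|_{[0,t_{i_0^*}]}/\delta)\}$ and then treats $i_0 < i_0^*$ and $i_0 = i_0^*$ separately via an integer argument, whereas your contradiction at $m = \max\{\eta_x(|x_0|/E_0), \eta_d(\|d\|_{[0,t_{i_0}]}/\delta)\}$, using monotonicity of $\eta_d$, is slightly more direct and also rules out $i_0 = \infty$ in the same step.
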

\begin{proof}
See Appendix~\ref{appx:proof-cap-ini}.
\end{proof}

The next lemma provides bounds on the state before the first capture and on the quantization radius at the first capture.
\begin{lem}\label{lem:proof-cap-ini-kinf}
There exist functions $ \hat\gamma^x_0, \hat\gamma^d_0 \in \KInf $ such that, if the state is first captured at $ t_{i_0} $, then for any sampling time $ t_k \leq t_{i_0} $,
\begin{equation}\label{eq:proof-cap-ini-kinf-x}
	|x(t_k)| \leq \hat\gamma^x_0(|x_0|) + \hat\gamma^d_0(\|d\|_{[0, t_{i_0}]}).
\end{equation}
Moreover, there exists a continuous function $ \hat\chi^E_0: \R_{> 0} \times \R_{\geq 0} \times \R_{\geq 0} \to \R_{> 0} $ such that,
\begin{enumerate*}[a)]
\item 
for each fixed $ E, s > 0 $, $ \hat\chi^E_0(E, s, \cdot) $ and $ \hat\chi^E_0(E, \cdot, s) $ are nondecreasing, and
\item
at the first capture time $ t_{i_0} $, 
\end{enumerate*}
\begin{equation}\label{eq:proof-cap-ini-E}
	E_{i_0} \leq \hat\chi^E_0(E_0, |x_0|, \|d\|_{[0, t_{i_0}]}).
\end{equation}
\end{lem}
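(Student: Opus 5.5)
If the state is visible at $t_0=0$, then $i_0=0$ and there is nothing to search. In that case $|x(t_0)|=|x_0|$ and $E_{i_0}=E_0$, so both bounds hold trivially; we only need $\hat\chi^E_0(E_0,\cdot,\cdot)\ge E_0$. Otherwise the system is in a searching stage on $[t_0,t_{i_0})$, with $u\equiv0$. No escape time lies in this stage, so the modified update \eqref{eq:propag-srch-E-esc} is never used. Hence we may iterate \eqref{eq:propag-srch-x}, \eqref{eq:propag-srch-E} and \eqref{eq:propag-srch-err} directly. Throughout, $\hat\Lambda=(1+\varepsilon)\Lambda>1$ and $\Lambda>1$.

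\textbf{Explicit formulas.} Since $x^*_0=0$ and $x^*_{k+1}=\hat S x^*_k$, we get $x^*_k=0$ for all $k\le i_0$. Thus $e(t_k)=x(t_k)$ during the search. Note that the jump of $\hat x$ at sampling times does not affect $x$. Iterating \eqref{eq:propag-srch-err} with $\bar d:=\|d\|_{[0,t_{i_0}]}$ gives, for $k\le i_0$,
\[
|x(t_k)|\le \Lambda^k|x_0|+\frac{\Lambda^k-1}{\Lambda-1}\Phi\bar d .
\]
Similarly, \eqref{eq:propag-srch-E} yields
\[
E_{i_0}=\hat\Lambda^{i_0}E_0+\frac{\hat\Lambda^{i_0}-1}{\hat\Lambda-1}\Phi\delta\le \hat\Lambda^{i_0}\Bigl(E_0+\frac{\Phi\delta}{\hat\Lambda-1}\Bigr).
\]

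\textbf{Bounding the capture time.} Next we insert the bound \eqref{eq:proof-cap-ini} from Lemma~\ref{lem:proof-cap-ini}, namely $i_0\le\max\{\eta_x(|x_0|/E_0),\eta_d(\bar d/\delta)\}$. Using $\lceil\log_{1+\varepsilon}s\rceil\le\log_{1+\varepsilon}s+1$, we obtain for $s>1$
\[
\Lambda^{\eta_x(s)}\le\Lambda\, s^{\alpha},\qquad \alpha:=\log_{1+\varepsilon}\Lambda,
\]
and likewise $\Lambda^{\eta_d(s)}\le\Lambda(r_\varepsilon s)^{\alpha}$. The same estimates hold with $\Lambda$ replaced by $\hat\Lambda$ and exponent $\hat\alpha:=\log_{1+\varepsilon}\hat\Lambda$. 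Hence
\[
\Lambda^{i_0}\le\Lambda\max\bigl\{1,(|x_0|/E_0)^{\alpha},(r_\varepsilon\bar d/\delta)^{\alpha}\bigr\}.
\]

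\textbf{Constructing the state gains.} The main obstacle is that the gains in \eqref{eq:proof-cap-ini-kinf-x} must be $\KInf$ and independent of $E_0$. This is harmless because $E_0$ and $\delta$ are fixed design parameters, and $\hat\gamma^x_0,\hat\gamma^d_0$ may depend on them (the paper already allows $\gamma_1,\gamma_2$ to depend on the initial quantization parameters). The difficulty is the constant term "$\max\{1,\cdot\}$", which prevents vanishing at zero. We handle it by splitting into products.
\begin{itemize}
\item The term $\Lambda^{i_0}|x_0|$ is bounded by $\Lambda|x_0|\bigl(1+(|x_0|/E_0)^{\alpha}+(r_\varepsilon\bar d/\delta)^{\alpha}\bigr)$.
\item The mixed term $|x_0|\,\bar d^{\alpha}$ is split via $ab\le a^2+b^2$ into $|x_0|^2+\bar d^{2\alpha}$.
\item The disturbance term $\frac{\Lambda^{i_0}-1}{\Lambda-1}\Phi\bar d$ is handled analogously. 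Its part independent of $|x_0|$ is $\bar d$ times a bounded-or-power factor, which is $\KInf$ in $\bar d$. Its cross term $\bar d\,|x_0|^{\alpha}$ is split the same way.
\end{itemize}
Collecting terms gives polynomial-type $\KInf$ functions: $\hat\gamma^x_0(s)=c(s+s^{1+\alpha}+s^2+s^{2\alpha})$ and similarly for $\hat\gamma^d_0$, with constants depending on $E_0,\delta,\varepsilon,\Lambda,\Phi$. Each is continuous, strictly increasing, unbounded and zero at zero.

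\textbf{Bound on $E_{i_0}$.} For \eqref{eq:proof-cap-ini-E} we simply set
\[
\hat\chi^E_0(E,s,r):=\hat\Lambda\max\bigl\{1,(s/E)^{\hat\alpha},(r_\varepsilon r/\delta)^{\hat\alpha}\bigr\}\Bigl(E+\frac{\Phi\delta}{\hat\Lambda-1}\Bigr).
\]
This function is positive and continuous on $\R_{>0}\times\R_{\ge0}\times\R_{\ge0}$, and nondecreasing in $s$ and in $r$. It also dominates $E_0$, which covers the case $i_0=0$. This completes the plan.
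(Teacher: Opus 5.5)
Your proposal is correct and takes essentially the paper's route: iterate the searching-stage recursions with $x^*_0=0$, substitute the capture-time bound from Lemma~\ref{lem:proof-cap-ini}, and dominate the result by comparison functions. The only difference is that you make the paper's ``standard comparison-function construction'' explicit, using $\lceil \log_{1+\varepsilon} s\rceil \le \log_{1+\varepsilon} s + 1$ and splitting the cross terms, which yields concrete polynomial-type gains and an explicit $\hat\chi^E_0$ that also covers the case $i_0=0$.
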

\begin{proof}
See Appendix~\ref{appx:proof-cap-ini-kinf}.
\end{proof}

We now establish that if the state escapes, then it is guaranteed to be recaptured in finite time.
\begin{lem}\label{lem:proof-cap}
If the state escapes at $ t_j > 0 $, then it is recaptured at some sampling time $ t_i > t_j $ satisfying
\begin{equation}\label{eq:proof-cap}
	i \leq j + \max \left\{ \eta_d \left( \frac{\|d\|_{[t_{j-1}, t_{i}]}}{\delta} \right),\, 1 \right\},
\end{equation}
where $ \eta_d $ is defined in Lemma~\ref{lem:proof-cap-ini}.
\end{lem}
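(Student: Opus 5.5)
We track the error $ e_k := x(t_k) - x^*_k $ during the searching stage that begins at the escape time $ t_j $ and compare $ |e_k| $ with $ E_k $. Write $ \hat\Lambda = (1+\varepsilon)\Lambda $ and $ \bar d := \|d\|_{[t_{j-1}, t_i]} $, where $ t_i $ is the recapture time (or any later horizon, by monotonicity). The first step is to bound $ |e_j| $. Since $ t_{j-1} $ lies in a stabilizing stage, $ x^*_j = S c_{j-1} = \hat x(t_j^-) $. Then \eqref{eq:propag-stab-err} gives
\[
|e_j| = |e(t_j^-)| \leq \frac{\Lambda}{N} E_{j-1} + \Phi \|d\|_{[t_{j-1},t_j]} \leq \frac{\Lambda}{N}E_{j-1} + \Phi \bar d .
\]
Compare this with $ \hat E_j = \frac{\Lambda}{N} E_{j-1} + \Phi\delta $. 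If $ \bar d \le \delta $, then $ |e_j| \le \hat E_j $. In general, setting $ \sigma := \max\{\bar d/\delta, 1\} $ gives $ |e_j| \le \sigma \hat E_j $, because $ \frac{\Lambda}{N}E_{j-1} \le \sigma \frac{\Lambda}{N} E_{j-1} $.

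The second step is to propagate through the searching stage. On $ [t_k, t_{k+1}) $ for $ k \ge j $ we have $ u \equiv 0 $ and $ x^*_{k+1} = \hat S x^*_k $, so \eqref{eq:propag-srch-err} yields $ |e_{k+1}| \le \Lambda |e_k| + \Phi \bar d $. By induction, for $ m \ge 1 $,
\[
|e_{j+m}| \le \Lambda^m \sigma \hat E_j + \Phi \bar d \,\frac{\Lambda^m - 1}{\Lambda - 1}.
\]
For the radius, \eqref{eq:propag-srch-E-esc} and \eqref{eq:propag-srch-E} give $ E_{j+1} = \hat\Lambda \hat E_j + \Phi\delta $ and $ E_{k+1} = \hat\Lambda E_k + \Phi\delta $. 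Hence
\[
E_{j+m} = \hat\Lambda^m \hat E_j + \Phi\delta \,\frac{\hat\Lambda^m - 1}{\hat\Lambda - 1}.
\]
Recapture holds at $ t_{j+m} $ once $ |e_{j+m}| \le E_{j+m} $, so it suffices to compare the two terms separately.

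The first term needs $ \Lambda^m\sigma \le \hat\Lambda^m $, i.e.\ $ (1+\varepsilon)^m \ge \sigma $. The second needs $ \bar d(\Lambda^m-1)/(\Lambda-1) \le \delta(\hat\Lambda^m - 1)/(\hat\Lambda-1) $. This comparison of geometric sums is the main obstacle. The plan is to use the ratio bound
\[
\frac{(\hat\Lambda^m-1)/(\hat\Lambda-1)}{(\Lambda^m-1)/(\Lambda-1)} \ge \frac{(1+\varepsilon)^m}{r_\varepsilon},
\]
which should follow from $ (\hat\Lambda^m - 1)/(\Lambda^m-1) \ge (1+\varepsilon)^m $, valid since $ \hat\Lambda^m - 1 \geq (1+\varepsilon)^m(\Lambda^m - 1) $ whenever $ (1+\varepsilon)^m \ge 1 $. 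The second condition then holds when $ (1+\varepsilon)^m \ge r_\varepsilon \bar d/\delta $. Because $ r_\varepsilon > 1 $, this implies the first condition as well.

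For the final bound, if $ \bar d \le \delta $ then both conditions hold with $ m = 1 $, so $ i \le j+1 $. Otherwise, taking $ m = \eta_d(\bar d/\delta) = \lceil \log_{1+\varepsilon}(r_\varepsilon \bar d/\delta) \rceil \ge 1 $ suffices. The first such $ m $ bounds $ i - j $, which gives \eqref{eq:proof-cap}. One subtlety is circularity: $ \bar d $ involves $ t_i $. It is resolved by arguing with $ \|d\|_{[t_{j-1}, t_{j+m}]} $ for each candidate $ m $ and taking the first $ m $ at which capture occurs.
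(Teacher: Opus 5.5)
Your proof is correct and follows essentially the paper's route. You bound the initial ratio $|e(t_j)|/\hat E_j$ by the disturbance ratio using the escape-time estimate $\hat E_j$, then rerun the geometric-sum comparison from the proof of Lemma~\ref{lem:proof-cap-ini}, absorbing the first condition via $r_\varepsilon > 1$ (the paper's $\eta_x \le \eta_d$). The only real difference is bookkeeping: you handle $\bar d \le \delta$ through $\sigma = \max\{\bar d/\delta, 1\}$ and a direct check at $m = 1$, whereas the paper splits on whether recapture occurs at $t_{j+1}$ and otherwise deduces $\delta < \|d\|_{[t_{j-1},t_{j+1}]}$.
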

\begin{proof}
See Appendix~\ref{appx:proof-cap}.
\end{proof}

The final lemma provides bounds on the state before recapture and on the quantization radius at recapture.
\begin{lem}\label{lem:proof-cap-kinf}
There exist $ \hat\gamma^x, \hat\gamma^d \in \KInf $ such that, if the state escapes at $ t_j $ and is recaptured at $ t_i $, then for any sampling time $ t_k $ between $ t_j $ and $ t_i $,
\begin{equation}\label{eq:proof-cap-kinf-x}
	|x(t_k)| \leq \hat\gamma^x(|x(t_j)|) + \hat\gamma^d(\|d\|_{[t_{j-1}, t_{i}]}).
\end{equation}
Moreover, there exists a continuous function $ \hat\chi^E: \R_{> 0} \times \R_{\geq 0} \to \R_{> 0} $ such that
\begin{enumerate*}[a)]
\item 
for each fixed $ E > 0 $, $ \hat\chi^E(E, \cdot) $ is nondecreasing, and 
\item
at the recapture time $ t_i $, 
\end{enumerate*}
\begin{equation}\label{eq:proof-cap-E}
	E_{i} \leq \hat\chi^E(E_{j-1}, \|d\|_{[t_{j-1}, t_{i}]}).
\end{equation}
\end{lem}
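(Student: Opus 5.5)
The plan is to unroll the searching-stage recursions explicitly between the escape time $t_j$ and the recapture time $t_i$, and then to use Lemma~\ref{lem:proof-cap} to bound the number of steps $i-j$ by a function of $\|d\|_{[t_{j-1},t_i]}/\delta$ only. Write $\bar d := \|d\|_{[t_{j-1},t_i]}$ and $m := k - j$ for $j \le k \le i$.

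\emph{State bound.} On $[t_k,t_{k+1})$ with $j\le k<i$ the input is zero, so $x(t_{k+1}) = e^{A\tau_s}x(t_k) + \int e^{A(t_{k+1}-\tau)}Dd(\tau)\,\mathrm{d}\tau$. This gives $|x(t_{k+1})|\le \Lambda|x(t_k)| + \Phi\bar d$, and by induction
\[
|x(t_k)| \le \Lambda^{m}|x(t_j)| + \frac{\Lambda^{m}-1}{\Lambda-1}\Phi\bar d \le \Lambda^{i-j}\bigl(|x(t_j)| + \Phi\bar d/(\Lambda-1)\bigr).
\]
By Lemma~\ref{lem:proof-cap}, $i-j\le M(\bar d):=\max\{\eta_d(\bar d/\delta),1\}$. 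For $\bar d\le\delta$ this equals $1$, and otherwise it is at most $1+\log_{1+\varepsilon}(r_\varepsilon\bar d/\delta)$, so $\Lambda^{i-j}\le \Lambda\max\{1,(r_\varepsilon\bar d/\delta)^{p}\}$ with $p:=\ln\Lambda/\ln(1+\varepsilon)$. The factor $\Lambda^{i-j}$ multiplies $|x(t_j)|$, so I would split the product with Young's inequality $ab\le a^2+b^2$ (or $\max\{a^2,b^2\}$):
\[
\Lambda^{i-j}|x(t_j)| \le \Lambda|x(t_j)| + \Lambda\,(r_\varepsilon\bar d/\delta)^{p}|x(t_j)| \le \Lambda|x(t_j)| + \Lambda|x(t_j)|^2 + \Lambda(r_\varepsilon\bar d/\delta)^{2p}.
\]
The remaining term $\Lambda^{i-j}\Phi\bar d/(\Lambda-1)$ is bounded by $\Lambda\Phi\bar d\max\{1,(r_\varepsilon\bar d/\delta)^p\}/(\Lambda-1)$. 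This is a continuous, strictly increasing function of $\bar d$ that vanishes at $0$, so it can be dominated by a $\KInf$ function. Taking $\hat\gamma^x(s):=\Lambda(s+s^2)$ and collecting the $\bar d$-terms into $\hat\gamma^d$ yields \eqref{eq:proof-cap-kinf-x}. One could instead use Lemma~\ref{lem:proof-esc} to convert $|x(t_j)|$ into $\Gamma\bar d$ directly, but the stated form keeps $|x(t_j)|$ explicit, and either route works.

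\emph{Radius bound.} By \eqref{eq:propag-srch-E-esc}, $E_{j+1} = \hat\Lambda(\tfrac{\Lambda}{N}E_{j-1}+\Phi\delta)+\Phi\delta$, and for $k>j$ we have $E_{k+1}=\hat\Lambda E_k+\Phi\delta$. Hence
\[
E_i = \hat\Lambda^{i-j-1}E_{j+1} + \frac{\hat\Lambda^{i-j-1}-1}{\hat\Lambda-1}\Phi\delta.
\]
This is nondecreasing in $i-j$, so replacing $i-j$ by $M(\bar d)$ gives the bound. Since $\eta_d$ has ceiling jumps, for continuity I would replace $M$ by the continuous nondecreasing majorant $1+\max\{0,\log_{1+\varepsilon}(r_\varepsilon\bar d/\delta)\}$, taking real exponents. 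The result is a $\hat\chi^E(E,s)$ that is continuous, positive, and nondecreasing in $s$.

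The main obstacle is the coupling between $|x(t_j)|$ and the disturbance-dependent growth factor $\Lambda^{i-j}$ in the state bound. This product is not a sum of $\KInf$ terms, and Young's inequality must be used to decouple it. A secondary point is the continuity issue created by the ceiling in $\eta_d$, which the continuous majorant handles.
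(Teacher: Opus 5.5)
Your proposal is correct and follows essentially the same route as the paper: you unroll the zero-input state recursion and the escape-adjusted radius recursion, then substitute the bound on $i-j$ from Lemma~\ref{lem:proof-cap}. The Young's-inequality decoupling of $\Lambda^{i-j}|x(t_j)|$ and the continuous majorant replacing the ceiling in $\eta_d$ make explicit what the paper only calls a ``standard comparison-function construction.''
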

\begin{proof}
See Appendix~\ref{appx:proof-cap-kinf}.
\end{proof}

\subsection{Proof of Theorem~\ref{thm:main}}\label{ssec:proof-main}
The proof proceeds by alternating between searching and stabilizing stages.
During searching stages, Lemmas~\ref{lem:proof-cap-ini} and~\ref{lem:proof-cap} ensure finite-time (re)capture, while Lemmas~\ref{lem:proof-cap-ini-kinf} and~\ref{lem:proof-cap-kinf} provide bounds on the state and the quantization radius.
During stabilizing stages, Lemmas~\ref{lem:proof-stab-exp} and~\ref{lem:proof-stab-kinf} provide state bounds, while Lemma~\ref{lem:proof-esc} bounds the state and the quantization radius at escape times in terms of the disturbance.
Combining these bounds yields the ISS bounds \eqref{eq:iss-gs} and \eqref{eq:iss-ag}.

We index the alternating searching and stabilizing stages as follows.
Let $ 0 \leq i_0 < j_1 < i_1 < \cdots $ be such that the state is first captured at $ t_{i_0} $, and for each $ l \geq 1 $, it escapes at $ t_{j_l} $ and is recaptured at $ t_{i_l} $.
By Lemma~\ref{lem:proof-cap-ini}, we have $ i_0 < \infty $.
If the state never escapes after some $ t_{i_l} $, we set $ j_{l+1} = \infty $.
By Lemma~\ref{lem:proof-cap}, if $ j_l < \infty $, then $ i_l < \infty $.

First, we establish the bound \eqref{eq:iss-gs} at sampling times by constructing the functions $ \gamma_1, \gamma_2 \in \KInf $.

\paragraph*{First searching stage $ [0, t_{i_0}) $:}
By Lemma~\ref{lem:proof-cap-ini-kinf}, the bound \eqref{eq:proof-cap-ini-kinf-x} holds for all $ t_k \leq t_{i_0} $, including the case $ i_0 = 0 $ (i.e., when the state is visible at $ t_0 = 0 $).

\paragraph*{First stabilizing stage $ [t_{i_0}, t_{j_1}) $:}
At the first capture time $ t_{i_0} $, Lemma~\ref{lem:proof-cap-ini-kinf} yields
\begin{equation*}
	|x(t_{i_0})| \leq \hat\gamma^x_0(|x_0|) + \hat\gamma^d_0(\|d\|_{[0, t_{i_0}]}),
\end{equation*}
and
\begin{equation*}
	E_{i_0} \leq \hat\chi^E_0(E_0, |x_0|, \|d\|_{[0, t_{i_0}]}).
\end{equation*}
From Lemma~\ref{lem:proof-stab-kinf}, for any $ t_{i_0} \leq t_k < t_{j_1} $,
\begin{align*}
	|x(t_k)|
	&\leq \chi^x(E_{i_0}, |x(t_{i_0})|) + \chi^d(E_{i_0}, \|d\|_{[t_{i_0}, t_k]}) \\
	&\leq \chi^x \left( \hat\chi^E_0(E_0, |x_0|, \|d\|_{[0, t_{i_0}]}), \hat\gamma^x_0(|x_0|) + \hat\gamma^d_0(\|d\|_{[0, t_{i_0}]}) \right) \\
	&\quad\, + \chi^d \left( \hat\chi^E_0(E_0, |x_0|, \|d\|_{[0, t_{i_0}]}), \|d\|_{[t_{i_0}, t_k]} \right).
\end{align*}

Using the above bounds, a standard comparison-function argument yields continuous functions $ \bar\chi_1, \bar\chi_2: \R_{> 0} \times \R_{\geq 0} \to \R_{\geq 0} $ such that
\begin{enumerate*}[a)]
\item	
for each fixed $ s > 0 $, $ \bar\chi_1(\cdot, s), \bar\chi_2(\cdot, s) $ are nondecreasing,
\item	
for each fixed $ E > 0 $, $ \bar\chi_1(E, \cdot), \bar\chi_2(E, \cdot) \in \KInf $, and
\item
for any $ t_{i_0} \leq t_k < t_{j_1} $,
\end{enumerate*}
\begin{align*}
	&\quad\,\, \bar\chi_1(E_0, |x_0|) + \bar\chi_2(E_0, \|d\|_{[0, t_k]}) \\
	&\geq \chi^x \left( \hat\chi^E_0(E_0, |x_0|, \|d\|_{[0, t_{k}]}), \hat\gamma^x_0(|x_0|) + \hat\gamma^d_0(\|d\|_{[0, t_{k}]}) \right) \\
	&\quad\, + \chi^d \left( \hat\chi^E_0(E_0, |x_0|, \|d\|_{[0, t_{k}]}), \|d\|_{[0, t_k]} \right),
\end{align*}
which implies
\[
	|x(t_k)| \leq \bar\chi_1(E_0, |x_0|) + \bar\chi_2(E_0, \|d\|_{[0, t_k]}).
\]

If $ t_{j_1} = \infty $, then the proof of \eqref{eq:iss-gs} \new{at sampling times} is complete.
Otherwise, consider an arbitrary escape time $ t_{j_l} < \infty $.

\paragraph*{Searching stage $ [t_{j_l}, t_{i_l}) $:}
At the escape time $ t_{j_l} $, Lemma~\ref{lem:proof-esc} yields
\begin{equation*}
	|x(t_{j_l})| \leq \Gamma \|d\|_{[t_{{j_l}-1}, t_{j_l}]}, \qquad E_{j_l-1} \leq \Gamma \|d\|_{[t_{{j_l}-1}, t_{j_l}]}.
\end{equation*}
From Lemma~\ref{lem:proof-cap-kinf}, for any $ t_{j_l} \leq t_k \leq t_{i_l} $,
\begin{align*}
	|x(t_k)| &\leq \hat\gamma^x(|x(t_{j_l})|) + \hat\gamma^d(\|d\|_{[t_{{j_l}-1}, t_{i_l}]}) \\
	&\leq \hat\gamma^x(\Gamma \|d\|_{[t_{{j_l}-1}, t_{j_l}]}) + \hat\gamma^d(\|d\|_{[t_{{j_l}-1}, t_{i_l}]}) \\
	&\leq \hat\gamma(\|d\|_{[t_{{j_l}-1}, t_{i_l}]}),
\end{align*}
where $ \hat\gamma(s) := \hat\gamma^x(\Gamma s) + \hat\gamma^d(s) \in \KInf $.

\paragraph*{Stabilizing stage $ [t_{i_l}, t_{j_{l+1}}) $:}
At the recapture time $ t_{i_l} $,
\begin{align*}
	|x(t_{i_l})| &\leq \hat\gamma(\|d\|_{[t_{{j_l}-1}, t_{i_l}]}),
\end{align*}
and Lemma~\ref{lem:proof-cap-kinf} also yields
\begin{align*}
	E_{i_l} &\leq \hat\chi^E(E_{{j_l}-1}, \|d\|_{[t_{{j_l}-1}, t_{i_l}]}) \\
	&\leq \hat\chi^E(\Gamma \|d\|_{[t_{{j_l}-1}, t_{j_l}]}, \|d\|_{[t_{{j_l}-1}, t_{i_l}]}).
\end{align*}
From Lemma~\ref{lem:proof-stab-kinf}, for any $ t_{i_l} \leq t_k < t_{j_{l+1}} $,
\begin{align*}
	|x(t_k)|
	&\leq \chi^x(E_{i_l}, |x(t_{i_l})|) + \chi^d(E_{i_l}, \|d\|_{[t_{i_l}, t_k]}) \\
	&\leq \chi^x \left( \hat\chi^E(\Gamma \|d\|_{[t_{{j_l}-1}, t_{j_l}]}, \|d\|_{[t_{{j_l}-1}, t_{i_l}]}), \hat\gamma(\|d\|_{[t_{{j_l}-1}, t_{i_l}]}) \right) \\
	&\quad\, + \chi^d \left( \hat\chi^E(\Gamma \|d\|_{[t_{{j_l}-1}, t_{j_l}]}, \|d\|_{[t_{{j_l}-1}, t_{i_l}]}), \|d\|_{[t_{i_l}, t_k]} \right) \\
	&\leq \bar\gamma(\|d\|_{[t_{{j_l}-1}, t_{k}]}),
\end{align*}
where $ \bar\gamma(s) := \chi^x(\hat\chi^E(\Gamma s, s), \hat\gamma(s)) + \chi^d(\hat\chi^E(\Gamma s, s), s) \in \KInf $.

Combining the above cases yields the bound \eqref{eq:iss-gs} for all sampling times with the $ \KInf $ functions
\begin{equation}\label{eq:iss-gs-gain}
\begin{aligned}
	\gamma_1(s) &:= \max\{\hat\gamma^x_0(s),\, \bar\chi_1(E_0, s)\}, \\
	\gamma_2(s) &:= \max\{\hat\gamma^d_0(s),\, \bar\chi_2(E_0, s),\, \hat\gamma(s),\, \bar\gamma(s)\}.
\end{aligned}
\end{equation}

Next, we establish the bound \eqref{eq:iss-ag} at sampling times by constructing the function $ \gamma_3 \in \KInf $.
If only finitely many searching stages occur, then the system eventually remains in a stabilizing stage, and Lemma~\ref{lem:proof-stab-exp} implies
\[
	\limsup_{k \to \infty} |x(t_k)| \leq \Phi \limsup_{t \to \infty} |d(t)|.
\]
If infinitely many searching stages occur, then the bounds established in the derivation of \eqref{eq:iss-gs} imply that, between escape times $ t_{j_l} $ and $ t_{j_{l+1}} $,
\begin{align*}
	|x(t_k)| \leq \hat\gamma(\|d\|_{[t_{{j_l}-1}, t_{i_l}]}) &\qquad \forall j_l \leq k \leq i_l, \\
	|x(t_k)| \leq \bar\gamma(\|d\|_{[t_{{j_l}-1}, t_{k}]}) &\qquad \forall i_l \leq k < j_{l+1}.
\end{align*}
Therefore,
\[
	\limsup_{k \to \infty} |x(t_k)| \leq \gamma_3 \left( \limsup_{t \to \infty} |d(t)| \right)
\]
with the $ \KInf $ function
\begin{equation}\label{eq:iss-ag-gain}
	\gamma_3(s) := \max\{\Phi s,\, \hat\gamma(s),\, \bar\gamma(s)\}.
\end{equation}

The extension of \eqref{eq:iss-gs} and \eqref{eq:iss-ag} from sampling times to all $ t \geq 0 $ follows from standard arguments.
Specifically, for any $ t \geq 0 $, let $ k $ be such that $ t \in [t_k, t_{k+1}) $.
We first consider the case that the system is in a stabilizing stage at $ t_k $ and $ c_k \neq 0 $.
From \eqref{eq:sys-stab} and \eqref{eq:err-stab},
\begin{align*}
	|x(t)| &\leq |\hat x(t)| + |e(t)| \\
	&\leq \|e^{(A + B K) (t - t_k)}\| |c_k| \\
	&\quad\, + \|e^{A (t - t_k)}\| |e(t_k)| + \bigg( \int_{0}^{t} \|e^{A s} D\| \dl s \bigg) \|d\|_{[t_k, t]} \\
	&\leq \bar\Lambda |c_k| + \frac{\tilde\Lambda}{N} E_k + \Phi \|d\|_{[t_k, t]},
\end{align*}
where
\[
	\bar\Lambda := \max_{0 \leq s \leq \tau_s} \|e^{(A + B K) s}\|, \qquad
	\tilde\Lambda := \max_{0 \leq s \leq \tau_s} \|e^{A s}\|.
\]
By \eqref{eq:quant-cell},
\[
	|c_k| \leq |x(t_k)| + |x(t_k) - c_k| \leq |x(t_k)| + \frac{E_k}{N},
\]
and since \eqref{eq:quant-cell-0} fails at $ t_k $,
\[ |x(t_k)| > \frac{E_k}{N}. \]
Hence
\begin{equation}\label{eq:proof-inter-x}
	|x(t)| \leq \tilde H |x(t_k)| + \Phi \|d\|_{[t_k, t]},
\end{equation}
where
\[
	\tilde H:= 2 \bar\Lambda + \tilde\Lambda.
\]
If the system is in a stabilizing stage at $ t_k $ and $ c_k = 0 $, or the system is in a searching stage at $ t_k $, then $ u \equiv 0 $ on $ [t_k, t_{k+1})$, and
\begin{align*}
    |x(t)| 
    &\leq \|e^{A (t - t_k)}\| |x(t_k)| + \bigg( \int_{0}^{\tau_s} \|e^{A s} D\| \dl s \bigg) \|d\|_{[t_k, t]} \\
    &\leq \tilde\Lambda |x(t_k)| + \Phi \|d\|_{[t_k, t]},
\end{align*}
so \eqref{eq:proof-inter-x} still holds.

Since the bound \eqref{eq:iss-gs} holds for all sampling times with $ \gamma_1, \gamma_2 \in \KInf $ defined in \eqref{eq:iss-gs-gain}, by \eqref{eq:proof-inter-x}, it holds for all $ t \geq 0 $ with the $ \KInf $ functions
\begin{equation*}
\begin{aligned}
	\gamma_1(s) &:= \tilde H \max\{\hat\gamma^x_0(s),\, \bar\chi_1(E_0, s)\}, \\
	\gamma_2(s) &:= \tilde H \max\{\hat\gamma^d_0(s),\, \bar\chi_2(E_0, s),\, \hat\gamma(s),\, \bar\gamma(s)\} + \Phi s.
\end{aligned}
\end{equation*}

Since
\[
	\limsup_{k \to \infty} |x(t_k)| \leq \gamma_3 \left( \limsup_{t \to \infty} |d(t)| \right)
\]
with $ \gamma_3 \in \KInf $ defined in \eqref{eq:iss-ag-gain}, by \eqref{eq:proof-inter-x},
\begin{align*}
	\limsup_{t \to \infty} |x(t)| &\leq \limsup_{k \to \infty} \tilde H |x(t_k)| + \Phi \|d\|_{[t_k, t_{k+1}]} \\
	&\leq \limsup_{k \to \infty} \tilde H |x(t_k)| + \Phi \limsup_{t \to \infty} |d(t)|,
\end{align*}
which implies the bound \eqref{eq:iss-ag} with the $ \KInf $ function
\begin{equation*}
	\gamma_3(s) := \tilde H \max\{\Phi s,\, \hat\gamma(s),\, \bar\gamma(s)\} + \Phi s.
\end{equation*}

\section{Simulation example}\label{sec:sim}
We illustrate the proposed communication and control strategy on the linear system \eqref{eq:lin} with
\[
	A = \begin{bmatrix} 1 & 0 \\ 0 & -1.5 \end{bmatrix}, \;
	B = \begin{bmatrix} 1 \\ 0.5 \end{bmatrix}, \;
	D = \begin{bmatrix} 1 \\ 0 \end{bmatrix}, \;
	K = \begin{bmatrix} -3.5 & 0 \end{bmatrix},
\]
sampling period $ \tau_s = 0.1 $~s, and integer $ N = 5 $, which satisfy Assumptions~\ref{ass:stabilizable} and~\ref{ass:data-rate}.
The design parameters are selected as described in Sections~\ref{sec:strat} and~\ref{sec:reach}:
\[
	E_0 = 0.5, \; \varepsilon = 0.2, \; \delta = 0.1, \; \psi = 0.5, \; \rho = 150, \; \phi = 0.01.
\]

The simulation starts from the initial state
$ x_0 = \begin{bmatrix} 1 & 1 \end{bmatrix}^\top $,
which lies outside the initial quantization range.
To induce escape and recapture events, disturbance pulses of magnitude $ 1.5 $ and duration $ 0.2 $~s are applied at $ t = 3 $~s, $ 9.5 $~s, and $ 11.2 $~s; otherwise, $ d(t) = 0 $.

\begin{figure}[!t]
\centering
\begin{subfigure}{.475\textwidth}
\centering
\includegraphics[width=1\textwidth,trim={120pt 40pt 120pt 40pt},clip]{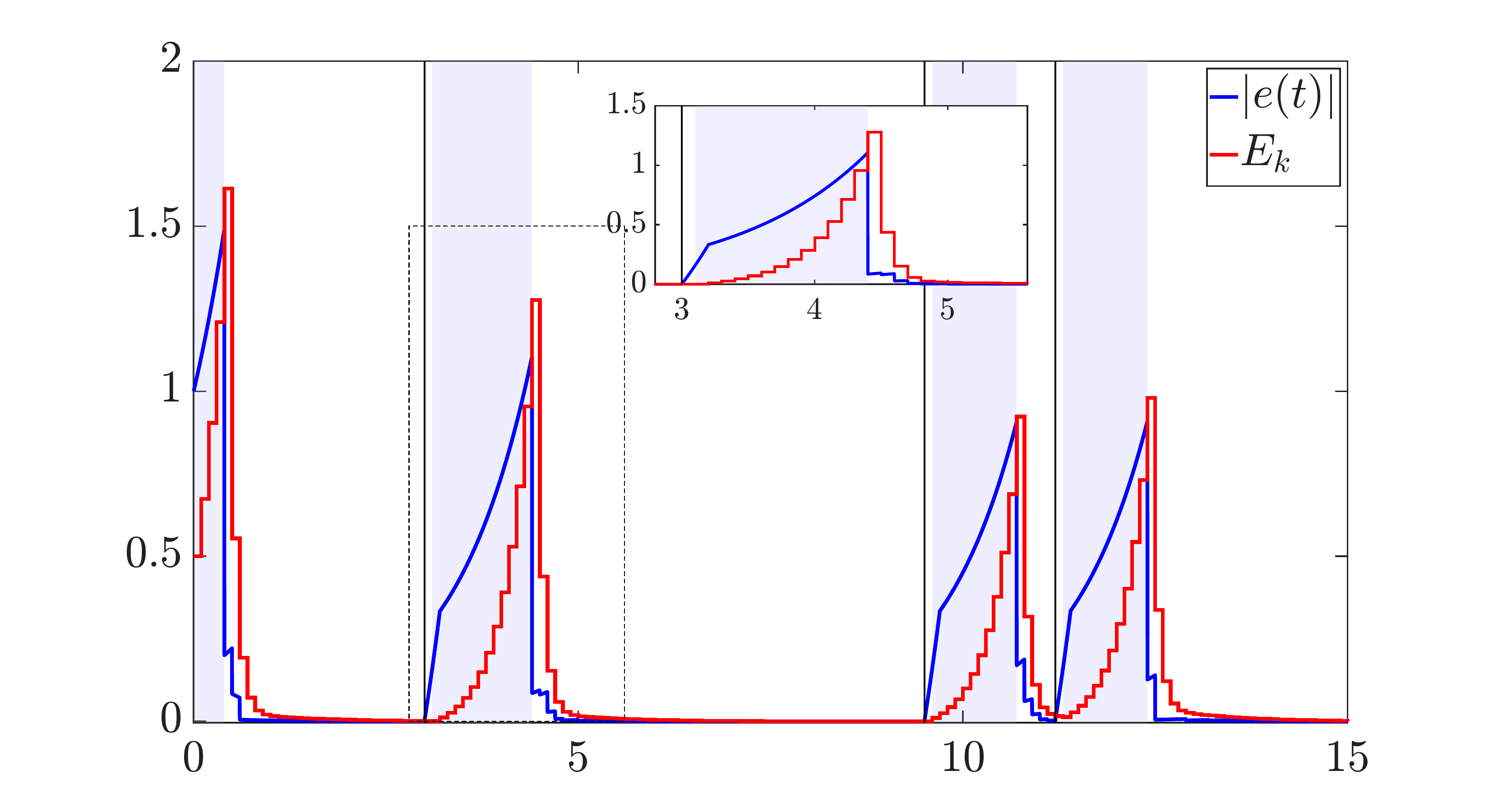}
\caption{}
\label{fig:sim-err-E}
\end{subfigure}%
\hfill
\begin{subfigure}{.475\textwidth}
\centering
\includegraphics[width=1\textwidth,trim={120pt 40pt 120pt 0pt},clip]{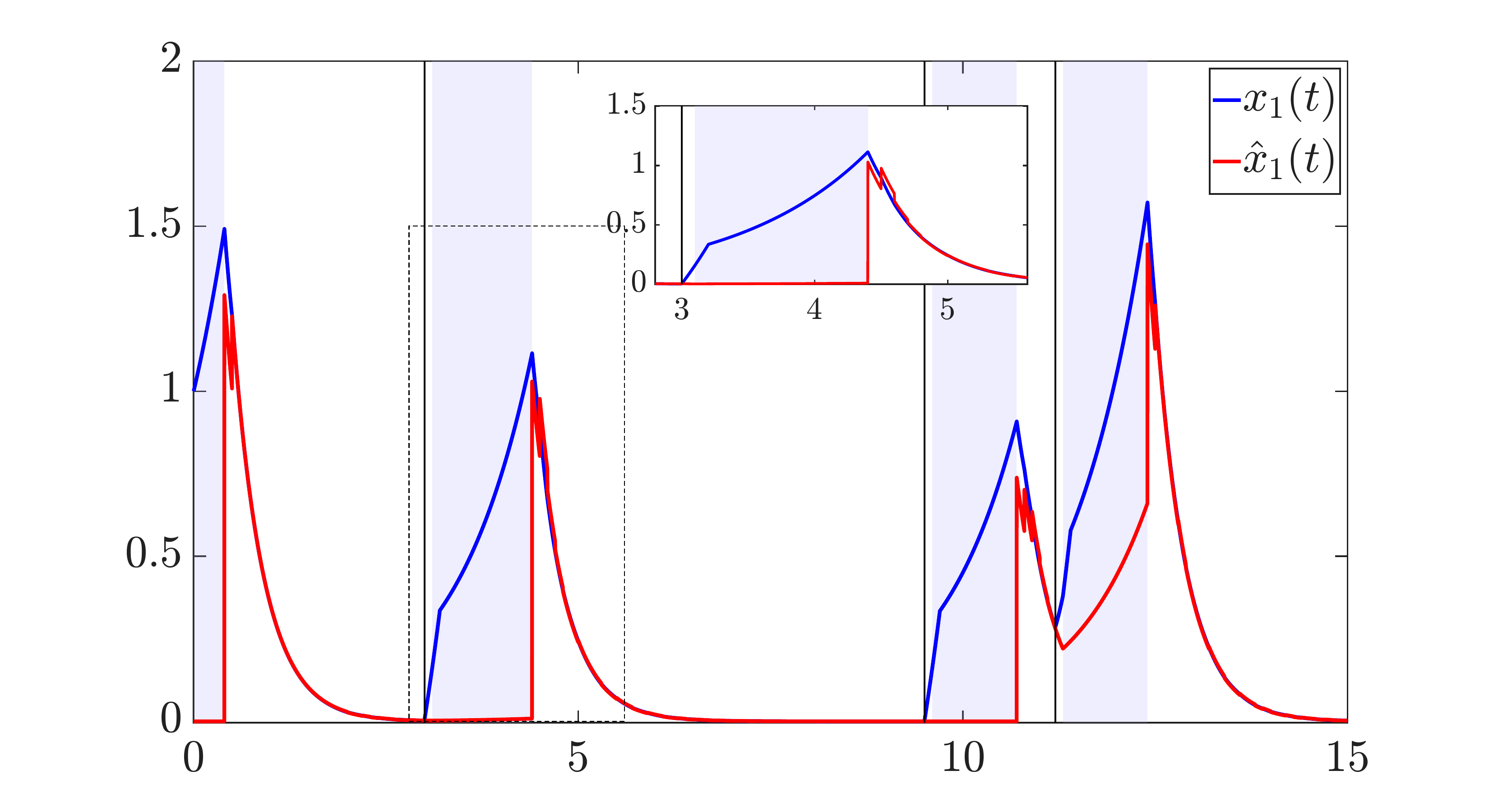}
\caption{}
\label{fig:sim-x1-aux1}
\end{subfigure}%
\caption{\new{Evolution of
(\subref{fig:sim-err-E}) the error $ |e(t)| = |x(t) - \hat x(t)| $ and the quantization radius $ E_k $, and
(\subref{fig:sim-x1-aux1}) the first components $ x_1(t) $ and $ \hat x_1(t) $ of the system and auxiliary states.}
Black vertical lines indicate disturbance onsets.
Shaded regions indicate searching stages.
The zoomed-in plots highlight the first escape and the subsequent recapture.}
\label{fig:sim}
\end{figure}
Figure~\ref{fig:sim}(\subref{fig:sim-err-E}) shows the evolution of the error $ |e(t)| = |x(t) - \hat x(t)| $ and the quantization radius $ E_k $.
During stabilizing stages, $ |e(t_k)| $ and $ E_k $ decrease rapidly at sampling times,
whereas $ |e(t)| $ may increase between sampling times due to error propagation.
Disturbance pulses cause sharp increases in $ |e(t)| $, leading to escapes.
During searching stages, $ E_k $ eventually grows faster than $ |e(t_k)| $, ensuring finite-time (re)capture.
\new{Since $E_k$ is only updated at sampling times, it is plotted as a step function.}

Figure~\ref{fig:sim}(\subref{fig:sim-x1-aux1}) shows the evolution of the first components $ x_1(t) $ and $ \hat x_1(t) $ of the system and auxiliary states.
During stabilizing stages, $ x_1(t) $ is driven toward zero 
under the control input $ u(t) = -3.5 \hat x_1(t) $, where $ \hat x_1(t) $ decreases continuously between sampling times and jumps when new quantized measurements are received.
During searching stages, the control input is set to zero,
and both $ x_1(t) $ and $ \hat x_1(t) $ increase, although the latter is difficult to discern due to its small magnitude.
These behaviors are consistent with the ISS properties established in Theorem~\ref{thm:main}.

\section{Conclusion}\label{sec:end}
This paper studied feedback stabilization of linear systems under data-rate constraints in the presence of completely unknown disturbances.
Using dynamic quantization and a new disturbance-estimation design, the proposed approach establishes input-to-state stability (ISS) with respect to the disturbance.

Future work will focus on extensions to switched and nonlinear dynamics and applications to networked control systems, \new{where communication limitations and robustness requirements play a fundamental role}.

\appendix
\section{Proofs of technical lemmas}\label{appx:lem}
We first recall several basic facts from linear algebra used throughout the proofs. For any $ v, w \in \R^n $, the $ \infty $-norm satisfies
\begin{equation}\label{eq:infty-norm-vec-prod}
    |v|^2 \leq v^\top v, \quad v^\top w \leq n |v| |w|.
\end{equation}
For any symmetric matrix $ M \in \R^{n \times n} $ and any $ v \in \R^n $, 
\begin{equation}\label{eq:mat-minmax}
    \eigMin(M)\, v^\top v \leq v^\top M v \leq \eigMax(M)\, v^\top v,
\end{equation}
which yields
\begin{equation}\label{eq:mat-infty-norm}
    \eigMin(M) |v|^2 \leq v^\top M v \leq n \eigMax(M) |v|^2.
\end{equation}

\subsection{Proof of Lemma~\ref{lem:proof-stab-lya}}\label{appx:proof-stab-lya}
We first consider the case $ c_k \neq 0 $. From the update formula \eqref{eq:propag-stab-x},
\[
    x^*_{k+1} = S c_k = S (x^*_k + \Delta_k), \qquad \Delta_{k} := c_{k} - x^*_{k},
\]
where
\[ |\Delta_{k}| \leq \frac{N-1}{N} E_k \]
by \eqref{eq:quant-cell-ctr}. Then
\begin{align*}
    &\quad\,\, (x^*_{k+1})^\top P x^*_{k+1} \\
    &= (x^*_k + \Delta_k)^\top S^\top P S (x^*_k + \Delta_k) \\
    &= (x^*_k)^\top S^\top P S x^*_k + 2 (x^*_k)^\top S^\top P S \Delta_k + \Delta_k^\top S^\top P S \Delta_k.
\end{align*}
Using the Lyapunov equation \eqref{eq:lya-eq} and bounds \eqref{eq:mat-minmax} and \eqref{eq:mat-infty-norm}, we obtain
\begin{align*}
    (x^*_k)^\top S^\top P S x^*_k &= (x^*_k)^\top (P - Q) x^*_k \\
    &= (x^*_k)^\top P x^*_k - \frac{1}{2} (x^*_k)^\top Q x^*_k - \frac{1}{2} (x^*_k)^\top Q x^*_k \\
    &\leq \left( 1 - \frac{\eigMin(Q)}{2 \eigMax(P)} \right) (x^*_k)^\top P x^*_k - \frac{1}{2} \eigMin(Q) |x^*_k|^2.
\end{align*}
Moreover, by \eqref{eq:infty-norm-vec-prod},
\begin{align*}
    2 (x^*_k)^\top S^\top P S \Delta_k &\leq 2 n_x |x^*_k| \|S^\top P S\| |\Delta_k|, \\
    \Delta_k^\top S^\top P S \Delta_k &\leq n_x \|S^\top P S\| |\Delta_k|^2.
\end{align*}
Combining the bounds and completing the square yields
\begin{align*}
    &\quad\,\, (x^*_{k+1})^\top P x^*_{k+1} \\
    &\leq \bigg( 1 - \frac{\eigMin(Q)}{2 \eigMax(P)} \bigg) (x^*_k)^\top P x^*_k - \frac{1}{2} \eigMin(Q) |x^*_k|^2 \\
    &\quad\, + 2 n_x |x^*_k| \|S^\top P S\| |\Delta_k| + n_x \|S^\top P S\| |\Delta_k|^2 \\
    &\leq \bigg( 1 - \frac{\eigMin(Q)}{2 \eigMax(P)} \bigg) (x^*_k)^\top P x^*_k + \chi |\Delta_k|^2 \\
    &\quad\, - \bigg( \sqrt{\frac{1}{2} \eigMin(Q)} |x^*_k| - \frac{\sqrt{2} n_x \|S^\top P S\|}{\sqrt{\eigMin(Q)}} |\Delta_k| \bigg)^2 \\
    &\leq \bigg( 1 - \frac{\eigMin(Q)}{2 \eigMax(P)} \bigg) (x^*_k)^\top P x^*_k + \frac{(N - 1)^2}{N^2} \chi E_k^2,
\end{align*}
where $ \chi $ is defined in \eqref{eq:chi}.

Next, from the update formula \eqref{eq:propag-stab-E} and Young's inequality,
\begin{align*}
    E_{k+1}^2
    &= \bigg( \frac{\Lambda}{N} E_k + \phi \sqrt{V_k} \bigg)^2 \\
    &\leq (1 + \psi) \frac{\Lambda^2}{N^2} E_k^2 + \left( 1 + \frac{1}{\psi} \right) \phi^2 V_k
\end{align*}
where $ \psi > 0 $ is chosen according to \eqref{eq:psi}.

Finally, combining both bounds in $ V_{k+1} = (x^*_{k+1})^\top P x^*_{k+1} + \rho E_{k+1}^2 $ yields
\[
V_{k+1} \le \nu V_k,
\]
where $ \rho > 0 $ is chosen according to \eqref{eq:rho}, and $ \nu \in (0, 1) $ is defined in \eqref{eq:nu}.

If $ c_k = 0 $, then $ x^*_{k+1} = S c_k = 0 $, and
\begin{align*}
    V_{k+1} &= \rho E_{k+1}^2 \leq \left( (1 + \psi) \frac{\Lambda^2}{N^2} + \left( 1 + \frac{1}{\psi} \right) \phi^2 \rho \right) V_k,
\end{align*}
so \eqref{eq:proof-stab-lya} still holds.
%

\subsection{Proof of Lemma~\ref{lem:proof-stab-lya-bnd-x}}\label{appx:proof-stab-lya-bnd-x}
We first prove \eqref{eq:proof-stab-lya-bnd-x}.
Since $ V_k = (x^*_k)^\top P x^*_k + \rho E_k^2 $, the upper bound in \eqref{eq:mat-infty-norm} yields
\begin{align*}
	\sqrt{V_k} \leq \sqrt{n_x \eigMax(P)} |x^*_k| + \sqrt{\rho} E_k,
\end{align*}
where we used $ \sqrt{a + b} \leq \sqrt{a} + \sqrt{b} $. Moreover, by \eqref{eq:quant-rng},
\[
	|x^*_k| \leq |x(t_k)| + |x(t_k) - x^*_k| \leq |x(t_k)| + E_k.
\]
Hence \eqref{eq:proof-stab-lya-bnd-x} holds with
\[
	C_1 := \sqrt{n_x \eigMax(P)} + \sqrt{\rho}.
\]

Next, we establish \eqref{eq:proof-stab-x-bnd-lya}.
From \eqref{eq:quant-rng},
\begin{align*}
	|x(t_k)| &\leq |x^*_k| + |x(t_k) - x^*_k| \leq |x^*_k| + E_k.
\end{align*}
Using the lower bound in \eqref{eq:mat-infty-norm}, we obtain
\[
	|x_k^*| \leq \sqrt{\frac{V_k}{\eigMin(P)}}, \qquad E_k \leq \sqrt{\frac{V_k}{\rho}}
\]
which yields \eqref{eq:proof-stab-x-bnd-lya} with
\[
	C_2 := \frac{1}{\sqrt{\eigMin(P)}} + \frac{1}{\sqrt{\rho}}.
\]

Finally, we prove \eqref{eq:proof-stab-x-bnd-lya-propag}.
From the error estimate \eqref{eq:propag-stab-err} and update formula \eqref{eq:propag-stab-x},
\begin{align*}
	|x(t_{k+1})| &\leq |\hat x(t_{k+1}^-)| + |e(t_{k+1}^-)| \\
	&\leq \|S\| |c_k| + \frac{\Lambda}{N} E_k + \Phi \|d\|_{[t_k, t_{k+1}]}.
\end{align*}
We first consider the case $ c_k \neq 0 $. By \eqref{eq:quant-cell-ctr} and the lower bound in \eqref{eq:mat-infty-norm},
\begin{align*}
	|c_k| &\leq |x_k^*| + |c_k - x_k^*| \leq |x_k^*| + \frac{N - 1}{N} E_k.
\end{align*}
Hence
\begin{align*}
	|x(t_{k+1})|
	&\leq \|S\| |x_k^*| + \frac{(N-1) \|S\| + \Lambda}{N} E_k + \Phi \|d\|_{[t_k, t_{k+1}]}.
\end{align*}
Bounding $ |x^*_k| $ and $ E_k $ in terms of $ V_k $ yields \eqref{eq:proof-stab-x-bnd-lya-propag} with
\[
	C_3 := \frac{\|S\|}{\sqrt{\eigMin(P)}} + \frac{(N-1) \|S\| + \Lambda}{N \sqrt{\rho}}.
\]
If $ c_k = 0 $, then $ \hat x(t_{k+1}^-) = S c_k = 0 $, and
\[
	|x(t_{k+1})| = |e(t_{k+1}^-)| \leq \frac{\Lambda}{N} E_k + \Phi \|d\|_{[t_k, t_{k+1}]},
\]
so \eqref{eq:proof-stab-x-bnd-lya-propag} still holds.

\subsection{Proof of Lemma~\ref{lem:proof-stab-exp}}\label{appx:proof-stab-exp}
By iterating Lemma~\ref{lem:proof-stab-lya}, we obtain
\[
	V_{k-1} \leq \nu^{k-1-l} V_l,
\]
where $ \nu \in (0, 1) $. Applying \eqref{eq:proof-stab-lya-bnd-x} at $ t_l $ and \eqref{eq:proof-stab-x-bnd-lya-propag} at $ t_{k-1} $ yields
\begin{align*}
	|x(t_k)| &\le C_3 \sqrt{V_{k-1}} + \Phi \|d\|_{[t_{k-1}, t_k]} \\
	&\leq C_3 \nu^{(k-l-1)/2} \sqrt{V_l} + \Phi \|d\|_{[t_{k-1}, t_k]} \\
	&\leq C_1 C_3 \nu^{(k-l-1)/2} (|x(t_l)| + E_l) + \Phi \|d\|_{[t_{k-1}, t_k]}.
\end{align*}
Hence \eqref{eq:proof-stab-exp} holds with
\[
	C := C_1 C_3 \nu^{-1/2}, \qquad \lambda := -\frac{1}{2} \ln\nu > 0.
\]
%

\subsection{Proof of Lemma~\ref{lem:proof-stab-kinf}}\label{appx:proof-stab-kinf}
We consider two cases depending on whether the initial state dominates the disturbance.

\paragraph{Case~1: $ |x(t_l)| > \Phi \|d\|_{[t_l, t_k]} $.}
Let
\[ H := 2 \|S\| + \Lambda > 1. \]
Then $ \log_\nu H < 0 $ since $ \nu \in (0, 1) $.
Fix any $ \kappa \in (0, -1/\log_\nu H) $, and define
\begin{equation*}
    l_x := \max \left\{ \left\lceil \kappa \log_\nu|x(t_l)| \right\rceil,\, 0 \right\}.
\end{equation*}
Then $ \nu^{l_x} \leq |x(t_l)|^\kappa $ for all $ |x(t_l)| > 0 $.
We distinguish between the ``long-time'' regime $ k \geq l + l_x $ and the ``short-time'' regime $ k < l + l_x $.

If $ k \geq l + l_x $, iterating Lemma~\ref{lem:proof-stab-lya} yields
\[
	V_k \leq \nu^{k-l} V_l \leq \nu^{l_x} V_l \leq |x(t_l)|^{\kappa} V_l.
\]
Applying \eqref{eq:proof-stab-lya-bnd-x} at $ t_l $ and \eqref{eq:proof-stab-x-bnd-lya} at $ t_k $, we obtain
\begin{align*}
    |x(t_k)| &\leq C_2 \sqrt{V_k} \\
    &\leq C_2 |x(t_l)|^{\kappa/2} \sqrt{V_l} \\
    &\leq C_1 C_2 |x(t_l)|^{\kappa/2} (|x(t_l)| + E_l) =: \chi_1^x(E_l, |x(t_l)|),
\end{align*}
where $ \chi_1^x(\cdot, s) $ is nondecreasing for each fixed $ s > 0 $ and $ \chi_1^x(E, \cdot) \in \KInf $ for each fixed $ E > 0 $.

If $ l \leq k < l + l_x $, we first consider the case $ c_k \neq 0 $.
From the error estimate \eqref{eq:propag-stab-err} and update formula \eqref{eq:propag-stab-x},
\begin{align*}
	|x(t_{k+1})| &\leq |\hat x(t_{k+1}^-)| + |e(t_{k+1}^-)| \\
	&\leq \|S\| |c_k| + \frac{\Lambda}{N} E_k + \Phi \|d\|_{[t_k, t_{k+1}]}.
\end{align*}
By \eqref{eq:quant-cell},
\[
	|c_k| \leq |x(t_k)| + |x(t_k) - c_k| \leq |x(t_k)| + \frac{E_k}{N},
\]
and since \eqref{eq:quant-cell-0} fails at $ t_k $,
\[ |x(t_k)| > \frac{E_k}{N}. \]
Hence
\begin{equation}\label{eq:proof-stab-kinf-x-short}
\begin{aligned}
	|x(t_{k+1})| &\leq H |x(t_k)| + \Phi \|d\|_{[t_k, t_{k+1}]}.
\end{aligned}
\end{equation}
If $ c_k = 0 $, then $ u \equiv 0 $ on $ [t_k, t_{k+1})$, and
\begin{align*}
    |x(t_{k+1})| 
    &\leq \|e^{A \tau_s}\| |x(t_k)| + \bigg( \int_{0}^{\tau_s} \|e^{A s} D\| \dl s \bigg) \|d\|_{[t_k, t_{k+1}]} \\
    &\leq \Lambda |x(t_k)| + \Phi \|d\|_{[t_k, t_{k+1}]},
\end{align*}
so \eqref{eq:proof-stab-kinf-x-short} still holds.
Since \eqref{eq:proof-stab-kinf-x-short} holds for all sampling times between $ t_l $ and $ t_k $, iterating it yields
\begin{align*}
	|x(t_k)| &\leq H^{k-l} |x(t_l)| + \frac{H^{k-l} - 1}{H - 1} \Phi \|d\|_{[t_l, t_k]} \\
	&\leq \frac{H^{k-l+1} - 1}{H - 1} |x(t_l)| \\
	&\leq \frac{H^{l_x} - 1}{H - 1} |x(t_l)| =: \chi_2^x(|x(t_l)|).
\end{align*}
If $ |x(t_l)| = s < 1 $, then $ l_x < \kappa \log_\nu s + 1 $, and
\begin{align*}
	\chi_2^x(s) &< \frac{H^{\kappa \log_\nu s + 1}}{H - 1} s = \frac{H}{H - 1} s^{\kappa \log_\nu H + 1} =: \chi_3^x(s).
\end{align*}
where $ \chi_3^x \in \KInf $ since $ \kappa \log_\nu H + 1 > 0 $.
If $ |x(t_l)| = s \geq 1 $, then $ l_x = 0 $, and $ \chi_2^x(s) = 0 < \chi_3^x(s) $.

Finally,
\begin{align*}
	|x(t_k)|
	&\leq \max\{\chi_1^x(E_l, |x(t_l)|),\, \chi_3^x(|x(t_l)|)\} \\
	&=: \chi^x(E_l, |x(t_l)|),
\end{align*}
which satisfies the stated monotonicity and $ \KInf $ properties.

\paragraph{Case~2: $ |x(t_l)| \leq \Phi \|d\|_{[t_l, t_k]} $.}
The proof follows the same steps as in Case~1 and is omitted for brevity. It yields analogous functions $ \chi_1^d $ and $ \chi_3^d $ such that
\begin{align*}
	|x(t_k)| 
	&\leq \max\{\chi_1^d(E_l, \|d\|_{[t_l, t_k]}),\, \chi_3^d(\|d\|_{[t_l, t_k]}|)\} \\
	&=: \chi^d(E_l, \|d\|_{[t_l, t_k]}),
\end{align*}
which satisfies the stated monotonicity and $ \KInf $ properties.

Combining the two cases yields \eqref{eq:proof-stab-kinf}.
%

\subsection{Proof of Lemma~\ref{lem:proof-esc}}\label{appx:proof-esc}
Since \eqref{eq:quant-rng} holds at $ t_{j-1} $ but fails at $ t_j $,
\[
	|e(t_j)| = |x(t_j) - x^*_j| > E_j.
\]
Applying the error estimate \eqref{eq:propag-stab-err} and update formula \eqref{eq:propag-stab-E} at $ t_{j-1} $, we obtain
\[
	\frac{\Lambda}{N} E_{j-1} + \Phi \|d\|_{[t_{j-1}, t_j]} > \frac{\Lambda}{N} E_{j-1} + \phi \sqrt{V_{j-1}},
\]
that is,
\[
	\phi \sqrt{V_{j-1}} < \Phi \|d\|_{[t_{j-1}, t_j]}.
\]
Therefore,
\begin{align*}
	E_{j-1} &\leq \sqrt{\frac{V_{j-1}}{\rho}} < \frac{1}{\phi \sqrt{\rho}} \Phi \|d\|_{[t_{j-1}, t_j]}.
\end{align*}
Next, applying \eqref{eq:proof-stab-x-bnd-lya-propag} at $ t_{j-1} $ yields
\begin{align*}
	|x(t_j)| &\leq C_3 \sqrt{V_{j-1}} + \Phi \|d\|_{[t_{j-1}, t_j]} \\
	&< \left( \frac{C_3}{\phi} + 1 \right) \Phi \|d\|_{[t_{j-1}, t_j]}.
\end{align*}
Combining the two bounds, we obtain \eqref{eq:proof-esc} with
\[
	\Gamma := \max \left\{ \frac{1}{\phi \sqrt{\rho}},\, \frac{C_3}{\phi} + 1 \right\}\, \Phi > 0,
\]
where $ C_3 > 0 $ is from Lemma~\ref{lem:proof-stab-lya-bnd-x}.
%

\subsection{Proof of Lemma~\ref{lem:proof-cap-ini}}\label{appx:proof-cap-ini}
Let $ t_k > 0 $ be a sampling time such that the state has not yet been captured by $ t_k $.
Then the system remains in a searching stage on $ [0, t_k) $, and iterating the error estimate \eqref{eq:propag-srch-err} and update formulas \eqref{eq:propag-srch-x} and \eqref{eq:propag-srch-E} from $ 0 $ to $ t_k $ with $ x^*_0 = 0 $ yields
\begin{align*}
	|e(t_k^-)| &\leq \Lambda^{k} |x_0| + \frac{\Lambda^{k} - 1}{\Lambda - 1} \Phi \|d\|_{[0, t_k]}, \\
	E_{k} &= \hat\Lambda^{k} E_0 + \frac{\hat\Lambda^{k} - 1}{\hat\Lambda - 1} \Phi \delta.
\end{align*}
Let $ i_0^* $ be the smallest positive integer such that
\[
	i_0^* \geq \max \left\{ \eta_x \left( \frac{|x_0|}{E_0} \right),\, \eta_d \left( \frac{\|d\|_{[0, t_{i_0^*}]}}{\delta} \right) \right\}.
\]
Such an integer exists if $ \|d\|_{[0, \infty)} $ is finite.
We show that the state is captured no later than $t_{i_0^*}$.

First, since $ i_0^* \geq \eta_x(|x_0|/E_0) $,
\begin{align*}
	\hat\Lambda^{i_0^*} E_0
	&= \Lambda^{i_0^*} (1 + \varepsilon)^{i_0^*} E_0 \\
	&\geq \Lambda^{i_0^*} (1 + \varepsilon)^{\eta_x(|x_0|/E_0)} E_0
	\geq \Lambda^{i_0^*} |x_0|.
\end{align*}
Second, if $ \delta < \|d\|_{[0, t_{i_0^*}]} $, then 
\[
	i_0^* \geq \eta_d \left( \frac{\|d\|_{[0, t_{i_0^*}]}}{\delta} \right) \geq \log_{1 + \varepsilon} \left( \frac{r_\varepsilon \|d\|_{[0, t_{i_0^*}]}}{\delta} \right),
\]
and
\begin{align*}
    \frac{\hat\Lambda^{i_0^*} - 1}{\hat\Lambda - 1} \delta &> \dfrac{\Lambda^{i_0^*} - 1}{\hat\Lambda - 1} (1 + \varepsilon)^{i_0^*} \delta \\
    &= \dfrac{\Lambda^{i_0^*} - 1}{\Lambda - 1} \frac{(1 + \varepsilon)^{i_0^*}}{r_\varepsilon}\, \delta 
    \geq \dfrac{\Lambda^{i_0^*} - 1}{\Lambda - 1} \|d\|_{[0, t_{i_0^*}]}.
\end{align*}
Otherwise, $ \delta \geq \|d\|_{[0, t_{i_0^*}]} $, and since $ \hat\Lambda > \Lambda $,
\[
	\frac{\hat\Lambda^{i_0^*} - 1}{\hat\Lambda - 1} \delta \geq \frac{\Lambda^{i_0^*} - 1}{\Lambda - 1} \|d\|_{[0, t_{i_0^*}]}.
\]

Combining these bounds yields
\[
	E_{i_0^*} \geq |e(t_{i_0^*})|. 
\]
Hence the state is captured at some sampling time $ t_{i_0} \leq t_{i_0^*} $.
If $ i_0 < i_0^* $, then \eqref{eq:proof-cap-ini} follows immediately. If $ i_0 = i_0^* $, then
\[
	i_0 - 1 < \max \left\{ \eta_x \left( \frac{|x_0|}{E_0} \right),\, \eta_d \left( \frac{\|d\|_{[0, t_{i_0-1}]}}{\delta} \right) \right\},
\]
and since both sides are integers, \eqref{eq:proof-cap-ini} still holds.
%

\subsection{Proof of Lemma~\ref{lem:proof-cap-ini-kinf}}\label{appx:proof-cap-ini-kinf}
Consider any $ t_k \le t_{i_0} $. Since the state is first captured at $ t_{i_0} $, the system remains in a searching stage on $ [0, t_{i_0}) $.

First, iterating the error estimate \eqref{eq:propag-srch-err} and update formula \eqref{eq:propag-srch-x} from $ 0 $ to $ t_k $ with $ x^*_0 = 0 $ yields
\begin{align*}
	|x(t_k)| &\leq \Lambda^{k} |x_0| + \frac{\Lambda^{k} - 1}{\Lambda - 1} \Phi \|d\|_{[0, t_k]} \\
	&\leq \Lambda^{i_0} |x_0| + \frac{\Lambda^{i_0} - 1}{\Lambda - 1} \Phi \|d\|_{[0, t_{i_0}]}.
\end{align*}
Substituting the bound on $ i_0 $ from \eqref{eq:proof-cap-ini} into the right-hand side, we obtain a bound on $ |x(t_k)| $ that is nondecreasing in $ |x_0| $ and $ \|d\|_{[0, t_{i_0}]} $ and vanishes when $ |x_0| = \|d\|_{[0, t_{i_0}]} = 0 $.
Therefore, it can be dominated by $ \KInf $ functions $ \hat\gamma^x_0(|x_0|) + \hat\gamma^d_0(\|d\|_{[0, t_{i_0}]}) $ using a standard comparison-function construction, which proves \eqref{eq:proof-cap-ini-kinf-x}.

Second, iterating the update formula \eqref{eq:propag-srch-E} from $ 0 $ to $ t_{i_0} $ yields
\begin{align*}
	E_{i_0} &= \hat\Lambda^{i_0} E_0 + \frac{\hat\Lambda^{i_0} - 1}{\hat\Lambda - 1} \Phi \delta.
\end{align*}
Substituting again the bound on $ i_0 $ from \eqref{eq:proof-cap-ini} into the right-hand side, we obtain a bound on $ E_{i_0} $ that, for each fixed $ E_0 > 0 $, is nondecreasing in $ |x_0| $ and $ \|d\|_{[0, t_{i_0}]} $.
Therefore, by another standard comparison-function construction, there exists a continuous function $ \hat\chi^E_0: \R_{> 0} \times \R_{\geq 0} \times \R_{\geq 0} \to \R_{> 0} $ with the stated monotonicity property such that \eqref{eq:proof-cap-ini-E} holds.

\subsection{Proof of Lemma~\ref{lem:proof-cap}}\label{appx:proof-cap}
If the state is recaptured at $ t_{j+1} $, then $ i = j + 1 $ and \eqref{eq:proof-cap} holds.
In the remainder of the proof, we assume that the state is not recaptured at $ t_{j+1} $.

The system remains in a searching stage on $ [t_j, t_i) $.
Applying the same argument as in the proof of Lemma~\ref{lem:proof-cap-ini}, with the adjustment at $ t_j $ according to \eqref{eq:propag-srch-E-esc}, we obtain
\begin{equation}\label{eq:proof-cap-ini-esc}
	i \leq j + \max \left\{ \eta_x \left( \frac{|e(t_j)|}{\hatE_j} \right),\, \eta_d \left( \frac{\|d\|_{[t_j, t_i]}}{\delta} \right),\, 1 \right\},
\end{equation}
where $ \eta_x, \eta_d $ are defined in Lemma~\ref{lem:proof-cap-ini} and satisfy $\eta_x(s) \leq \eta_d(s) $ for all $ s \geq 0 $.

Next, applying the searching-stage error estimate \eqref{eq:propag-srch-err} at $ t_j $ and the stabilizing-stage error estimate \eqref{eq:propag-stab-err} at $ t_{j-1} $ yields
\begin{align*}
	|e(t_{j+1})| &\leq \Lambda |e(t_j)| + \Phi \|d\|_{[t_j, t_{j+1}]} \\
	&\leq \frac{\Lambda^2}{N} |e(t_{j-1})| + (\Lambda + 1) \Phi \|d\|_{[t_{j-1}, t_{j+1}]}.
\end{align*}
Moreover, from the adjusted update formula \eqref{eq:propag-srch-E-esc},
\begin{align*}
	E_{j+1} &= \hat\Lambda \hatE_j + \Phi \delta = \frac{\hat\Lambda \Lambda}{N} E_{j-1} + (\hat\Lambda + 1) \Phi \delta,
\end{align*}
where $ \hat\Lambda = (1 + \varepsilon) \Lambda $.

Since the state escapes at $ t_j $ and is not recaptured at $ t_{j+1} $,
\[
	|e(t_{j-1})| \leq E_{j-1}, \qquad |e(t_{j+1})| > E_{j+1},
\]
and the above bound on $ |e(t_{j+1})| $ and formula of $ E_{j+1} $ imply
\[
	\delta < \|d\|_{[t_{j-1}, t_{j+1}]}.
\]
Consequently,
\[
	\frac{|e(t_{j})|}{\hat E_{j}} \leq \frac{\frac{\Lambda}{N} |e(t_{j-1})| + \Phi \|d\|_{[t_{j-1}, t_{j+1}]}}{\frac{\Lambda}{N} E_{j-1} + \Phi \delta} < \frac{\|d\|_{[t_{j-1}, t_{j+1}]}}{\delta},
\]
and thus
\begin{align*}
    \eta_x \left( \frac{|e(t_j)|}{\hat E_j} \right) \leq \eta_x \left( \frac{\|d\|_{[t_{j-1}, t_{j+1}]}}{\delta}\right) \leq \eta_d \left( \frac{\|d\|_{[t_{j-1}, t_{j+1}]}}{\delta}\right).
\end{align*}
Substituting this bound into \eqref{eq:proof-cap-ini-esc} yields \eqref{eq:proof-cap}.
%

\subsection{Proof of Lemma~\ref{lem:proof-cap-kinf}}\label{appx:proof-cap-kinf}
Since the state escapes at $ t_j $ and is recaptured at $ t_i $, the system remains in a searching stage on $ [t_j, t_i) $. Consider any sampling time $ t_k \in [t_j, t_i) $.

First, since $ u \equiv 0 $ on $ [t_j, t_i) $, the state satisfies
\[ \dot x = A x + D d, \]
and thus
\begin{align*}
	|x(t_{k+1})| &\leq \Lambda |x(t_k)| + \Phi \|d\|_{[t_k, t_{k+1}]}.
\end{align*}
Iterating this bound from $ t_j $ to $ t_k $ yields 
\begin{align*}
	|x(t_k)| &\leq \Lambda^{k-j} |x(t_j)| + \frac{\Lambda^{k-j} - 1}{\Lambda - 1} \Phi \|d\|_{[t_j, t_k]} \\
	&\leq \Lambda^{i-j} |x(t_j)| + \frac{\Lambda^{i-j} - 1}{\Lambda - 1} \Phi \|d\|_{[t_j, t_i]}.
\end{align*}
Substituting the bound on $ i $ from \eqref{eq:proof-cap} into the right-hand side, we obtain a bound on $ |x(t_k)| $ that is nondecreasing in $ |x(t_j)| $ and $ \|d\|_{[t_{j-1}, t_{i}]} $ and vanishes when $ |x(t_j)| = \|d\|_{[t_{j-1}, t_{i}]} = 0 $.
Therefore, it can be dominated by $ \KInf $ functions $ \hat\gamma^x(|x(t_j)|) + \hat\gamma^d(\|d\|_{[t_{j-1}, t_{i}]}) $ using a standard comparison-function construction, which proves \eqref{eq:proof-cap-kinf-x}.

Second, iterating the update formula \eqref{eq:propag-srch-E} from $ t_j $ to $ t_i $, with the adjustment at $ t_j $ according to \eqref{eq:propag-srch-E-esc}, yields
\begin{align*}
	E_{i} &= \hat \Lambda^{i-j} \hat E_j + \frac{\hat\Lambda^{i-j} - 1}{\hat\Lambda - 1} \Phi \delta \\
	&= \frac{\hat\Lambda^{i-j} \Lambda}{N} E_{j-1} + \frac{\hat\Lambda^{i-j+1} - 1}{\hat\Lambda - 1} \Phi \delta.
\end{align*}
Substituting again the bound on $ i $ from \eqref{eq:proof-cap} into the right-hand side, we obtain a bound on $ E_{i} $ that, for each fixed $ E_{j-1} > 0 $, is nondecreasing in $ \|d\|_{[t_{j-1}, t_{i}]} $.
Therefore, by another standard comparison-function construction, there exists a continuous function $ \hat\chi^E: \R_{> 0} \times \R_{\geq 0} \to \R_{> 0} $ with the stated monotonicity property such that \eqref{eq:proof-cap-E} holds.


\addcontentsline{toc}{section}{References}
\bibliography{reference-abbr,reference-temp}


\end{document}